\documentclass[aps, prx, twocolumn, superscriptaddress, 10pt]{revtex4-2}
\usepackage{amsthm}
\usepackage{amsmath,bm}
\usepackage{amssymb}
\usepackage{amsfonts}
\usepackage{graphicx}
\usepackage{txfonts}
\usepackage{xcolor}
\usepackage{braket}
\usepackage{bbm}
\usepackage{subcaption}
\usepackage{changes}
\usepackage{ragged2e}
\usepackage[colorlinks=true,linkcolor=blue,citecolor=blue,urlcolor=blue]{hyperref}
\usepackage[capitalise]{cleveref}
\usepackage{booktabs}
\usepackage{diagbox}

\newcommand{\ignore}[1]{} 

\newcounter{SaveEqnCntr}

\newcommand{\be}{\begin{equation}}
\newcommand{\ee}{\end{equation}}
\newcommand{\ba}{\begin{eqnarray}}
\newcommand{\ea}{\end{eqnarray}}

\def\>{\rangle}
\def\<{\langle}

\newtheorem{theorem}{Theorem}
\newtheorem{corollary}{Corollary}
\newtheorem{definition}{Definition}
\newtheorem{proposition}{Proposition}

\newtheorem{lemma}{Lemma}

\begin{document}
\title{Quantifying Margenau--Hill Nonclassicality}

\author{Sudip Chakrabarty}
\email{sudip27042000@gmail.com}
\affiliation{S. N. Bose National Centre for Basic Sciences, Block JD, Sector III, Salt Lake, Kolkata 700 106, India}

\begin{abstract}
Quasiprobability distributions offer a useful way of describing nonclassical features of quantum systems through their departure from classical probability theory. In this work, we investigate the quantification of nonclassicality associated with the Margenau--Hill quasiprobability (MHQ) distribution using its moments, without requiring reconstruction of the full distribution. First we introduce the logarithmic MHQ negativity as a quantifier of nonclassicality, and then derive a hierarchy of rigorous lower bounds in terms of low-order moments. Within the resulting hierarchy, the fourth-moment bound is the strongest among the bounds based on even moments. For qubits, we further derive a tight upper bound on the exact MHQ negativity and identify the corresponding extremal state. We also obtain an optimal positivity threshold based on the fourth moment for arbitrary pairs of qubit observables. Finally, we show that the relevant moments admit exact multicopy representations, providing a route to their estimation using interferometry or classical shadow techniques. Our results establish a moment-based framework for extracting quantitative information about MHQ negativity from a finite set of low-order observables.
\end{abstract}

\maketitle
\section{Introduction}
\label{s1} 
Nonclassicality is one of the central features that distinguishes quantum theory from classical physics. Phenomena such as superposition \cite{streltsov2017colloquium}, entanglement \cite{horodecki2009quantum}, nonlocality \cite{RevModPhys.86.419}, contextuality \cite{PhysRevA.71.052108,budroni2022kochen}, and measurement incompatibility~\cite{RevModPhys.95.011003} cannot, in general, be explained within a classical description and are responsible for many of the advantages offered by quantum technologies. For this reason, the detection and quantification of nonclassicality have attracted considerable attention in quantum information science \cite{lee1991measure,mari2011directly,vogel2014unified,innocenti2022nonclassicality, mallick2025efficient, chakrabarty2026operationaldetectionwignernegativity, probing_kirkwood}.

Quasiprobability representations provide a useful way of studying such nonclassical features by representing quantum states in a form analogous to classical probability distributions \cite{cahill1969density,ferrie2011quasi}. However, unlike ordinary probability distributions, quasiprobabilities are not required to satisfy all the Kolmogorov axioms, and this departure can reveal genuinely quantum behaviour that cannot be described by classical stochastic models \cite{veitch2012negative,tan2020negativity}. Some well-known examples are the Wigner distribution \cite{wigner1932quantum}, the Glauber--Sudarshan $P$ distribution \cite{sudarshan1963equivalence,glauber1963coherent}, and the Husimi $Q$ distribution \cite{husimi1940some}. Among these, the Wigner function has played a particularly important role, with its negativity being widely regarded as a signature of nonclassicality and linked to advantages in quantum computation, error correction, state distillation~\cite{kenfack2004negativity, spekkens2008negativity, mari2012positive, niset2009no,  PhysRevLett.89.137903}.

While phase-space quasiprobabilities are naturally suited for continuous-variable (CV) systems, most
of the modern quantum information protocols involve finite-dimensional systems and arbitrary
pairs of observables that are not restricted to canonical conjugate phase-space variables. This motivates the study of more general quasiprobability representations adapted to arbitrary pairs of incompatible observables. In this setting, the Kirkwood--Dirac (KD) quasiprobability distribution provides a natural generalization of phase-space quasiprobabilities. The KD distribution has recently attracted considerable attention because of its applications to weak measurements, quantum thermodynamics, metrology, quantum scrambling, and quantum computation~\cite{kirkwood1933quantum,dirac1945analogy,Budiyono_2023_correlation,budiyono_2024coherence,arvidsson2024properties,Lostaglio2023kirkwooddirac,Budiyono_2024_separation,budiyono_2025_entanglement, x819-898d}. However, since the KD distribution is generally complex-valued, its nonclassicality cannot be characterized solely through negativity. A particularly important related quasiprobability is the Margenau--Hill quasiprobability (MHQ) distribution~\cite{Margenau61}, obtained as the real part of the KD distribution. Being real-valued, the MHQ distribution admits a more direct interpretation: negative values quantify the departure from a classical joint-probability description associated with a given quantum state and a pair of observables.

Recent works have shown that MHQ negativity is not merely a mathematical curiosity, but has operational significance in quantum thermodynamics~\cite{ky8n-9bcy, PhysRevE.108.054109, Stepanyan2024energydensitiesin, PhysRevResearch.6.023280, PhysRevX.5.031038, D_az_2020, PRXQuantum.1.010309}, weak-measurements~\cite{PhysRevA.70.052115, Bizzarri_2025}, and many body systems~\cite{PRXQuantum.5.030201}. As these applications continue to grow, an important practical question naturally arises: how can one efficiently quantify MHQ negativity for an unknown quantum state? The straightforward approach is to reconstruct the entire quasiprobability distribution and evaluate its total negative weight. Such a strategy, however, becomes increasingly demanding with system size and typically requires complete measurement statistics or quasiprobability tomography~\cite{PhysRevLett.110.230602,PhysRevLett.110.230601,lundeen2011direct,PhysRevLett.108.070402,buscemi2013direct,langrenez2024convex}. This motivates the search for methods that certify or quantify MHQ negativity without reconstructing the full distribution.

In this work, we develop such a framework by exploiting the moments of the MHQ distribution, inspired by ideas from the classical moment problem. First, we introduce the logarithmic MHQ negativity, as a measure of MHQ negativity and derive rigorous lower bounds expressed solely in terms of a finite number of moments. Consequently, the proposed approach provides a compressed quantification of MHQ negativity while bypassing full quasiprobability reconstruction. Beyond the general framework, we establish several results that strengthen the practical applicability of the method. We identify the optimal fourth-moment lower bound within the proposed hierarchy, and obtain a tight universal bound on the exact qubit negativity together with an exact measurement-dependent certification threshold for qubit systems. We further extend the framework for representative CV states. Finally, we show that the required moments admit exact multicopy representations, allowing them to be estimated in principle through ancilla-assisted interferometric visibility estimation or nonlinear classical-shadow protocols.

The remainder of the paper is organized as follows. Section~\ref{s2} introduces the MHQ distribution and the corresponding moment framework. 
In Sec.~\ref{s3}, we define logarithmic MHQ negativity and establish its basic properties. 
Section~\ref{s4} develops the moment-based quantification, and studies important properties of the quantifier.
Section~\ref{s5} establishes bounds on qubit MHQ negativity, including a universal closed-form bound on the exact negativity and an optimal threshold for the fourth-moment quantifier.
An extension of the quantification strategy to continuous variable systems is discussed in Section~\ref{s6}. 
Section~\ref{s7} discusses multicopy representations of the MHQ moments and their estimation protocols.
Finally, Sec.~\ref{s8} concludes with a discussion and future directions.

\section{Preliminaries}
\label{s2}

We first introduce the MHQ distribution associated with a quantum state and a pair of observables, and discuss its basic properties. We then turn to its moments, which are the key tools for the quantification of MHQ negativity developed in this work.

\subsection{Margenau--Hill quasiprobability distribution}
\label{s2A}

Let $\rho\in\mathcal{D}(\mathbb{H})$ be a density operator
on a $d$-dimensional Hilbert space $\mathbb{H}$. Consider
two observables
\begin{equation}
A=\sum_i a_i\Pi_i^a,
\qquad
B=\sum_j b_j\Pi_j^b,
\end{equation}
where
\(
\Pi_i^a=\ket{a_i}\bra{a_i}
\)
and
\(
\Pi_j^b=\ket{b_j}\bra{b_j}
\)
are rank-one spectral projectors. The MHQ distribution associated with the triple
$(\rho,A,B)$ is defined as~\cite{Margenau61}
\begin{equation}
M_{ij}(\rho;A,B)
=
\frac{1}{2}
\operatorname{Tr}
\left[
\left(
\Pi_i^a\Pi_j^b
+
\Pi_j^b\Pi_i^a
\right)
\rho
\right].
\label{eq:MHQ_definition}
\end{equation}
Equivalently,
$M_{ij}(\rho;A,B)
=
\operatorname{Re}
\operatorname{Tr}
\left(
\Pi_j^b\Pi_i^a\rho
\right),$
so that the MHQ distribution can be understood as the real part of the corresponding KD distribution. The symmetrized operator acting on $\rho$ in
Eq.~\eqref{eq:MHQ_definition} is Hermitian. Hence every $M_{ij}$ is real. The MHQ distribution reproduces the correct marginal statistics associated with the measurements of $A$ and $B$. These properties follow directly from the definition and are summarized below. 
\begin{align}
\sum_{i,j}M_{ij}(\rho;A,B)
&=1, \nonumber
\\
\sum_j M_{ij}(\rho;A,B)
&=
\operatorname{Tr}(\Pi_i^a\rho), \nonumber
\\
\sum_i M_{ij}(\rho;A,B)
&=
\operatorname{Tr}(\Pi_j^b\rho).
\label{eq:MHQ_normalization}
\end{align}
Thus, although the MHQ distribution need not be a genuine
joint probability distribution, its marginals reproduce the
Born-rule statistics of the individual measurements.
A particularly simple situation arises when the two observables commute. In this case, the projectors $\Pi_i^a$ and $\Pi_j^b$ can be simultaneously diagonalized, and after a common relabeling of the shared eigenbasis, one finds
\begin{equation}
    M_{ij}(\rho; A,B) = \delta_{ij}\,\mathrm{Tr}(\Pi_i^a \rho),
\end{equation}
so that the MHQ distribution reduces to a classical probability distribution. Negative entries can therefore occur
only for incompatible measurements, although incompatibility
alone does not guarantee negativity for every state.
The real-valued structure of the MHQ distribution makes its
negative entries directly interpretable as a failure of a
classical joint-probability description for the chosen pair of
observables. Throughout this work, the term
\emph{MHQ negativity} refers specifically to this
state- and measurement-dependent nonpositivity.

\subsection{Moments of the MHQ distribution}
\label{s2B}

The moment sequence of a real distribution provides a compact
description of its global structure. For the finite MHQ
distribution, we define the following quantities.

\begin{definition}[MHQ moments]
The $n$th-order moment of the MHQ distribution associated
with $(\rho,A,B)$ is
\begin{equation}
m_n(\rho;A,B)
=
\sum_{i,j}
\left[
M_{ij}(\rho;A,B)
\right]^n,
\qquad
n\in\mathbb{N}.
\label{eq:MHQ_moments}
\end{equation}
\end{definition}

Normalization implies
$m_1(\rho;A,B)=1.$
Even-order moments depend only on the magnitudes of the MHQ
elements, whereas odd-order moments also retain information
about their signs. For later use, we also introduce the moments of the
absolute-valued distribution.
\begin{definition}[Absolute MHQ moments]
The $n$th-order absolute moment is defined as
\begin{equation}
\widetilde{m}_n(\rho;A,B)
=
\sum_{i,j}
\left|
M_{ij}(\rho;A,B)
\right|^n,
\qquad n>0
\label{eq:absolute_MHQ_moments}
\end{equation}
\end{definition}
The first absolute moment satisfies
\begin{equation}
\widetilde{m}_1
=
\sum_{i,j}|M_{ij}|
=
1+2
\sum_{M_{ij}<0}|M_{ij}|,
\label{eq:absolute_first_moment}
\end{equation}
where the second equality follows from
Eq.~\eqref{eq:MHQ_normalization}. Consequently,
\begin{equation}
\widetilde{m}_1=1
\quad\Longleftrightarrow\quad
M_{ij}\geq0
\ \ \forall\,i,j,
\label{eq:m1_positivity}
\end{equation}
whereas
\(
\widetilde{m}_1>1
\)
is equivalent to MHQ negativity. Moreover,
\begin{equation}
\widetilde{m}_{2n}=m_{2n},
\label{eq:absolute_even_equality}
\end{equation}
which will allow us to construct quantitative bounds using
ordinary even-order moments alone.

The relevance of moment matrices follows from the classical
moment problem. A real sequence $\{\mu_k\}_{k\geq0}$ is a
moment sequence of a positive Borel measure on
$\mathbb{R}$ only if the associated Hankel matrices
\begin{equation}
\big[H_p(\boldsymbol{\mu})\big]_{kl}
=
\mu_{k+l},
\qquad
k,l\in\{0,\ldots,p\},
\label{eq:classical_Hankel}
\end{equation}
are positive semidefinite at every order. Indeed, for any real
vector
\(
\boldsymbol{c}=(c_0,\ldots,c_p)^{T}
\),
\begin{equation}
\boldsymbol{c}^{T}
H_p(\boldsymbol{\mu})
\boldsymbol{c}
=
\int_{\mathbb{R}}
\left(
\sum_{k=0}^{p}c_kx^k
\right)^2
d\xi(x)
\geq0.
\label{eq:Hankel_quadratic_form}
\end{equation}

The same reasoning applies to a nonnegative MHQ distribution.
In this case, the MHQ elements themselves may be viewed as
the support points of a positive discrete measure whose
weights are also given by the MHQ elements. This leads to the
following necessary condition.

\begin{theorem}[Moment-matrix condition]
\label{thm:MHQ_Hankel}
If the MHQ distribution $(M (\rho))$ of a quantum state $\rho$ with respect to bases $\{\ket{a_i}\}$ and $\{\ket{b_j}\}$ with $i,j =1,2,...,d$, is positive, then
\begin{equation}
       H_{p}(\mathbf{m}) \succeq 0 . \label{Hankelmatrix}
    \end{equation}
Here, $[H_{p}(\mathbf{m})]_{kl} = m_{k+l+1}$ for $k,l \in \{0,1,...,p\}$, $p \in \mathbb{N}$ and $ \mathbf{m}= (m_1, m_2, ..., m_{2p+1})$ are the corresponding MHQ moments defined in Eq.~\eqref{eq:MHQ_moments}. Consequently, 
if $H_{p}(\mathbf{m}) \nsucceq  0$ for some $p$, the MHQ distribution is thereby certified to be nonclassical.
\end{theorem}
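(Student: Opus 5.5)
The plan is to realize the Hankel matrix $H_p(\mathbf{m})$ as a Gram matrix of a positive discrete measure, exactly as in Eq.~\eqref{eq:Hankel_quadratic_form}. Assume every $M_{ij}\geq 0$. Define on $\mathbb{R}$ the discrete measure
\begin{equation}
\xi=\sum_{i,j} M_{ij}\,\delta_{M_{ij}},
\end{equation}
which places weight $M_{ij}$ at the point $x=M_{ij}$. By assumption all weights are nonnegative, so $\xi$ is a positive Borel measure. Its total mass is $m_1=1$ by Eq.~\eqref{eq:MHQ_normalization}. Its $k$th moment is $\int x^k\,d\xi(x)=\sum_{i,j}M_{ij}^{k+1}=m_{k+1}$. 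The index shift by one in $[H_p(\mathbf{m})]_{kl}=m_{k+l+1}$ is therefore exactly the classical Hankel matrix~\eqref{eq:classical_Hankel} of $\xi$, with $\mu_k=m_{k+1}$.

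Next, for an arbitrary real vector $\boldsymbol{c}=(c_0,\dots,c_p)^T$, I expand the quadratic form directly:
\begin{equation}
\boldsymbol{c}^T H_p(\mathbf{m})\boldsymbol{c}
=\sum_{k,l=0}^{p}c_kc_l\sum_{i,j}M_{ij}^{k+l+1}
=\sum_{i,j}M_{ij}\Big(\sum_{k=0}^{p}c_kM_{ij}^{k}\Big)^2 .
\end{equation}
Each summand is a nonnegative weight $M_{ij}$ multiplied by a real square, so the sum is $\geq 0$. Since $H_p(\mathbf{m})$ is real symmetric, this nonnegativity for every real $\boldsymbol{c}$ gives $H_p(\mathbf{m})\succeq 0$ for every $p\in\mathbb{N}$.

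The certification statement is then just the contrapositive. If $H_p(\mathbf{m})\nsucceq 0$ for some $p$, the positivity hypothesis must fail, so some $M_{ij}<0$. By Eq.~\eqref{eq:m1_positivity} this means $\widetilde m_1>1$, and the distribution is nonclassical.

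There is no real obstacle here. The only points needing care are the index bookkeeping, where $m_{k+l+1}$ rather than $m_{k+l}$ appears because the weights coincide with the support points, and the remark that positive semidefiniteness of a real symmetric matrix is tested by real vectors alone. Both are handled by the direct expansion above.
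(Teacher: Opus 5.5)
Your proof is correct and matches the paper's appendix argument: you expand the quadratic form, pull out one factor of $M_{ij}$, and recognize the nonnegatively weighted sum of squares $\sum_{i,j}M_{ij}\bigl(\sum_{k}c_kM_{ij}^{k}\bigr)^2$. The certification claim then follows by contrapositive, and your discrete-measure framing with $\mu_k=m_{k+1}$ is the same motivation the paper gives in the main text.
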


For completeness, the proof is given in Appendix~\ref{app:1}. Theorem~\ref{thm:MHQ_Hankel} provides a family of sufficient
tests for MHQ negativity. In the following sections, we use the same moment data for a different purpose: constructing
certifiable lower bounds on the logarithmic MHQ negativity without reconstructing the complete quasiprobability distribution.

The use of moments as diagnostic tools has a long history in both classical and quantum settings. As already mentioned, in classical probability theory, the moment problem concerns the characterization of distributions from their moments, with positivity conditions on associated moment matrices providing necessary and sufficient criteria for \emph{existence}. Whereas in quantum information, moment-based methods have been successfully employed in detection of quantum entanglement~\cite{yu2021optimal, gray2018machine, PhysRevLett.125.200501, mallick2025higher, mukherjee2025efficient, PhysRevLett.109.130502, Tarabunga2026quantifyingmixed}, Wigner negativity~\cite{mallick2025efficient, chakrabarty2026operationaldetectionwignernegativity}, KD nonpositivity~\cite{probing_kirkwood}, and many other resources~\cite{PhysRevA.109.022247, chakrabarty2026detectionquantumimaginarityusing, x56c-2yp1}. 
Our earlier work~\cite{probing_kirkwood} introduced moment-based criteria for detecting KD nonpositivity. A similar approach is used in developing detection criteria for MHQ negativity in Theorem~\ref{thm:MHQ_Hankel}. However, in general, these criteria do not provide any information about the magnitude of the negative weight.
This present work addresses the distinct quantitative problem rather than mere detection: estimating the total negative weight of the MHQ distribution from a finite set of moments.

\section{Logarithmic quantification of Margenau--Hill
nonclassicality}
\label{s3}
While the presence of negativity in the MHQ distribution already signals nonclassicality for a specified state and pair of measurements, a more refined question naturally arises: \emph{how much nonclassicality is present in the MHQ distribution?} A quantitative answer to this question is essential, particularly in applications where MHQ negativity acts as a resource.

For the MHQ distribution
\(
\{M_{ij}(\rho;A,B)\}
\),
the total negative weight is defined as
\begin{equation}
\mathcal{N}_{\mathrm{MHQ}}(\rho;A,B)
= \sum_{M_{ij}<0}
\left|M_{ij}(\rho;A,B)\right|.
\label{eq:MHQ_negative_weight}
\end{equation}
Since the MHQ distribution is normalized,
\begin{equation}
\sum_{i,j}M_{ij}(\rho;A,B)=1,
\end{equation}
its first absolute moment satisfies
\begin{equation}
\widetilde{m}_1(\rho;A,B)
=
\sum_{i,j}
\left|M_{ij}(\rho;A,B)\right|
=
1+
2\mathcal{N}_{\mathrm{MHQ}}(\rho;A,B).
\label{eq:MHQ_l1_negative_weight}
\end{equation}
Thus,
\(
\widetilde{m}_1=1
\)
if and only if every MHQ element is nonnegative, whereas
\(
\widetilde{m}_1>1
\)
is equivalent to MHQ negativity.

We use the logarithm of this absolute weight as our
quantifier.

\begin{definition}[Logarithmic MHQ negativity]
\label{def:log_MHQ}
For a state $\rho$ and two projective observables $A$ and
$B$, the logarithmic MHQ negativity is defined as
\begin{equation}
L_{A,B}(\rho)
=
\ln
\left[
\sum_{i,j}
\left|M_{ij}(\rho;A,B)\right|
\right].
\label{eq:log_MHQ_definition}
\end{equation}
Equivalently,
\begin{equation}
L_{A,B}(\rho)
=
\ln
\left[
1+
2\mathcal{N}_{\mathrm{MHQ}}(\rho;A,B)
\right].
\label{eq:log_MHQ_negative_weight}
\end{equation}
Throughout this work, $\ln$ denotes the natural logarithm.
\end{definition}

The notation \(L_{A,B}(\rho)\) emphasizes that MHQ
negativity is not a property of the state alone: it also
depends on the chosen measurement pair. The quantity in
Eq.~\eqref{eq:log_MHQ_definition} is a faithful quantifier of
the total negative weight of the corresponding MHQ
distribution. We next collect several properties that will be
useful in the subsequent analysis.

\begin{proposition}
\label{prop:log_MHQ_properties}
Let $\rho\in\mathcal{D}(\mathbb{H})$, where
$\dim\mathbb{H}=d$, and let $A$ and $B$ be rank-one
projective observables with $d$ outcomes. Then
$L_{A,B}(\rho)$ satisfies the following properties.

\begin{enumerate}

\item \emph{Faithfulness and non-negativity:}
\begin{equation}
L_{A,B}(\rho)\geq0,\;\;
L_{A,B}(\rho)=0
\;\Longleftrightarrow\;
M_{ij}(\rho;A,B)\geq0
\quad
\forall\,i,j.
\label{eq:log_MHQ_faithfulness}
\end{equation}

\item \emph{Unitary covariance:} For every unitary $U$,
\begin{equation}
L_{A,B}(U\rho U^\dagger)
=
L_{U^\dagger A U,U^\dagger B U}(\rho).
\label{eq:log_MHQ_covariance}
\end{equation}

\item \emph{Invariance under outcome relabeling:}
Independent permutations of the outcome labels of either
$A$ or $B$ leave \(L_{A,B}(\rho)\) unchanged.

\item \emph{Dimension-dependent upper bound:}
\begin{equation}
L_{A,B}(\rho)\leq\ln d.
\label{eq:log_MHQ_upper_bound}
\end{equation}

\item \emph{Necessity of measurement incompatibility:}
If the spectral projectors of $A$ and $B$ commute pairwise,
then
$L_{A,B}(\rho)=0,
\;
\forall \rho.$
Consequently,
\begin{equation}
L_{A,B}(\rho)>0
\quad\Longrightarrow\quad
[A,B]\neq0.
\label{eq:log_MHQ_incompatibility}
\end{equation}

\item \emph{Necessity of coherence relative to $A$:}
\begin{equation}
[\rho,\Pi_i^a]=0,
\;\;
\forall\,i
\; \; \Longrightarrow \;\; L_{A,B}(\rho)=0.
\label{eq:log_MHQ_incoherent}
\end{equation}
For a nondegenerate observable $A$, condition
\eqref{eq:log_MHQ_incoherent} is equivalent to
\(
[\rho,A]=0
\).

\item \emph{Vanishing after complete $A$-dephasing:}
Let
\begin{equation}
\Delta_A(\rho)
=
\sum_i
\Pi_i^a\rho\Pi_i^a.
\label{eq:A_dephasing}
\end{equation}
Then
\begin{equation}
L_{A,B}\!\left(\Delta_A(\rho)\right)=0
\leq
L_{A,B}(\rho).
\label{eq:log_MHQ_dephasing}
\end{equation}

\end{enumerate}
\end{proposition}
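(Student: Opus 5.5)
The plan is to verify each item directly from the representation $M_{ij}=\operatorname{Re}\operatorname{Tr}(\Pi_j^b\Pi_i^a\rho)=\operatorname{Re}\big[\langle a_i|b_j\rangle\langle b_j|\rho|a_i\rangle\big]$, together with the identity $\widetilde m_1=1+2\mathcal N_{\mathrm{MHQ}}$ of Eq.~\eqref{eq:MHQ_l1_negative_weight}. Items 1--3 are essentially bookkeeping, 5--7 reduce to one computation, and 4 needs a single inequality.

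\textbf{Items 1--3.} For item 1, $\widetilde m_1\ge \sum_{ij}M_{ij}=1$, so $L_{A,B}\ge 0$. Equality $\widetilde m_1=1$ holds iff $\mathcal N_{\mathrm{MHQ}}=0$, which by Eq.~\eqref{eq:m1_positivity} is equivalent to all $M_{ij}\ge0$.

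For item 2, cyclicity gives
\begin{equation}
\operatorname{Tr}(\Pi_j^b\Pi_i^aU\rho U^\dagger)=\operatorname{Tr}\big(U^\dagger\Pi_j^bU\,U^\dagger\Pi_i^aU\,\rho\big).
\end{equation}
The spectral projectors of $U^\dagger AU$ and $U^\dagger BU$ are exactly $U^\dagger\Pi_i^aU$ and $U^\dagger\Pi_j^bU$. Hence the two MHQ arrays coincide entrywise, and so do their $\ell_1$ norms.

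For item 3, permuting $i$ or $j$ only reorders the terms of $\sum_{ij}|M_{ij}|$.

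\textbf{Item 4.} I would bound each entry by $|M_{ij}|\le |\langle a_i|b_j\rangle|\,|\langle b_j|\rho|a_i\rangle|$ and then apply Cauchy--Schwarz over the index pair $(i,j)$:
\begin{equation}
\sum_{ij}|M_{ij}|\le\Big(\sum_{ij}|\langle a_i|b_j\rangle|^2\Big)^{1/2}\Big(\sum_{ij}|\langle b_j|\rho|a_i\rangle|^2\Big)^{1/2}.
\end{equation}
Completeness of the two bases gives $\sum_{ij}|\langle a_i|b_j\rangle|^2=d$ and $\sum_{ij}|\langle b_j|\rho|a_i\rangle|^2=\operatorname{Tr}\rho^2\le1$.

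Therefore $\widetilde m_1\le\sqrt{d\operatorname{Tr}\rho^2}\le\sqrt d\le d$. This yields $L_{A,B}\le\tfrac12\ln d\le\ln d$, which is in fact stronger than the stated bound. This is the only step needing a real estimate. The main thing to check is that the Cauchy--Schwarz pairing is chosen so that both factors collapse by completeness.

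\textbf{Items 5--7.} The central computation is item 6. If $[\rho,\Pi_i^a]=0$, then using $\Pi_i^a=(\Pi_i^a)^2$ and cyclicity,
\begin{equation}
\operatorname{Tr}(\Pi_j^b\Pi_i^a\rho)=\operatorname{Tr}(\Pi_i^a\Pi_j^b\Pi_i^a\rho)=|\langle a_i|b_j\rangle|^2\langle a_i|\rho|a_i\rangle\ge0.
\end{equation}
So every $M_{ij}\ge0$, and item 1 gives $L_{A,B}=0$. For nondegenerate $A$, each $\Pi_i^a$ is a polynomial in $A$, so $[\rho,A]=0$ implies $[\rho,\Pi_i^a]=0$ for all $i$; the converse follows from $A=\sum_ia_i\Pi_i^a$.

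Item 7 follows because $\Delta_A(\rho)$ commutes with every $\Pi_i^a$. Applying item 6 gives $L_{A,B}(\Delta_A(\rho))=0$, and item 1 gives $0\le L_{A,B}(\rho)$.

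For item 5, if $[\Pi_i^a,\Pi_j^b]=0$ then $\Pi_i^a\Pi_j^b$ is a product of commuting projectors, hence itself a positive projector. Thus $M_{ij}=\operatorname{Tr}(\Pi_i^a\Pi_j^b\rho)\ge0$ for all $\rho$, and $L_{A,B}=0$. The contrapositive is Eq.~\eqref{eq:log_MHQ_incompatibility}, using that pairwise-commuting spectral projectors imply $[A,B]=0$.
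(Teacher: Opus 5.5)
Items 1--3, 6, 7 and the first half of item 5 follow the paper's appendix essentially step for step. You also justify the nondegenerate equivalence in item 6, which the paper only asserts.

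Item 4 takes a genuinely different and sharper route. The paper bounds each entry separately, $|M_{ij}|\le\sqrt{p_i^Ap_j^B}$, by Hilbert--Schmidt Cauchy--Schwarz applied to $\Pi_i^a\sqrt{\rho}$ and $\Pi_j^b\sqrt{\rho}$. The sum then factorizes as $(\sum_i\sqrt{p_i^A})(\sum_j\sqrt{p_j^B})\le d$. This discards the overlaps $|\langle a_i|b_j\rangle|$ and yields only $\ln d$. You instead apply Cauchy--Schwarz once over the whole index set $(i,j)$, pairing the overlaps against the mixed-basis matrix elements of $\rho$, so both factors collapse by completeness. Your computation is correct and gives $L_{A,B}(\rho)\le\tfrac12\ln(d\operatorname{Tr}\rho^2)\le\tfrac12\ln d$. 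This is a purity-dependent bound that improves $\ln d$ to $\tfrac12\ln d$. It is consistent with the qubit analysis: $\tfrac12\ln 2\approx0.347$, versus the tight $\ln(5/4)\approx0.223$ of Proposition~\ref{prop:universal_qubit_bound}. What the paper's entrywise route buys is that it depends only on the outcome statistics. It therefore extends at once to measurements with $n_A,n_B$ outcomes of arbitrary rank, giving $\tfrac12\ln(n_An_B)$. That can be the better bound for coarse-grained measurements with $n_An_B<d$.

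The one real slip is in the last step of item 5. The contrapositive of ``pairwise-commuting projectors $\Rightarrow L_{A,B}=0$'' is ``$L_{A,B}>0\Rightarrow[\Pi_i^a,\Pi_j^b]\neq0$ for some $i,j$''. To reach $[A,B]\neq0$ you need $[A,B]=0\Rightarrow[\Pi_i^a,\Pi_j^b]=0$ for all $i,j$. The fact you cite, that pairwise-commuting projectors imply $[A,B]=0$, is the converse and does not close the argument. The needed direction does hold under the hypotheses: for nondegenerate $A$ and $B$, each $\Pi_i^a$ is a polynomial in $A$ and each $\Pi_j^b$ a polynomial in $B$. This is the same interpolation fact you already use in item 6. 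Nondegeneracy is genuinely required here. If $A=\mathbb{I}$ is resolved into the rank-one projectors of the $\sigma_z$ eigenbasis and $B=B(\pi/6)$, then $[A,B]=0$, yet the state $\rho_*$ gives $L_{A,B}(\rho_*)=\ln(5/4)>0$. The paper's appendix passes over this step with a bare ``the contrapositive gives''. Your version is complete once the implication is reversed and attributed to nondegeneracy.
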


The proofs for these properties are provided in
Appendix~\ref{app:log_MHQ_properties}. These properties
clarify the scope of the quantifier. In particular, nonzero
\(L_{A,B}(\rho)\) requires both measurement incompatibility
and coherence of the state relative to the measurement
basis, although neither condition alone is sufficient to
guarantee MHQ negativity.

\section{Certifiable lower bounds for logarithmic MHQ negativity}
\label{s4}

The logarithmic MHQ negativity
\(
L_{A,B}(\rho)
\)
depends on the full set of quasiprobability elements
\(
\{M_{ij}(\rho;A,B)\}
\).
Its direct evaluation therefore requires full knowledge about the MHQ distribution.
This naturally motivates the following question: \emph{can one estimate MHQ negativity without reconstructing the full MHQ distribution?}

In this section, we show that a finite collection of moments can nevertheless
provide rigorous lower bounds on the logarithmic MHQ negativity. The construction is based on a probability distribution
obtained by normalizing the squared magnitudes of the MHQ
elements. We now introduce a normalized distribution
\begin{equation}
f_{ij}
=
\frac{|M_{ij}(\rho;A,B)|^2}
{\widetilde m_2(\rho;A,B)},
\label{eq:normalized_squared_MHQ}
\end{equation}
where
$\widetilde m_2
=
\sum_{i,j}|M_{ij}|^2,$ the second absolute MHQ moment.
Since the MHQ distribution is normalized, it cannot vanish
identically, and hence
\(
\widetilde m_2>0
\).
Consequently,
\begin{equation}
f_{ij}\geq0,
\qquad
\sum_{i,j}f_{ij}=1.
\end{equation}

For a probability distribution \(f=\{f_{ij}\}\), the Rényi
entropy of order \(\alpha>0\), with \(\alpha\neq1\), is
\begin{align}
R_{\alpha}(f)
&=
\frac{1}{1-\alpha}
\ln
\left(
\sum_{i,j}f_{ij}^{\alpha}
\right)
\label{eq:Renyi_entropy}\\
&=
\frac{1}{1-\alpha}
\ln
\left(
\frac{\widetilde m_{2\alpha}}
{\widetilde m_2^{\,\alpha}}
\right).
\label{eq:Renyi_MHQ_moments}
\end{align}

The order-\(1/2\) Rényi entropy is directly related to the
logarithmic MHQ negativity. Indeed,
\begin{align}
R_{1/2}(f)
&=
2\ln
\left(
\sum_{i,j}\sqrt{f_{ij}}
\right)
\nonumber\\
&=
2\ln
\left(
\frac{\widetilde m_1}
{\sqrt{\widetilde m_2}}
\right)
\nonumber\\
&=
2L_{A,B}(\rho)
-
\ln\widetilde m_2.
\label{eq:Renyi_half_log_negativity}\\
\Rightarrow \; L_{A,B}(\rho)
&=
\frac12
\left[
R_{1/2}(f)
+
\ln\widetilde m_2
\right].
\label{eq:L_from_Renyi_half}
\end{align}

Before proceeding further, we discuss one important property of Rényi entropies, which will be necessary to develop the lower bound. We write this property in the following lemma.

\begin{lemma}\label{lemma:renyi}
Let $R_r(f)$ denote the Rényi entropy associated with the probability distribution $\{f_{ij}\}$. Then for $r < s$,
\begin{equation}
    R_r(f) \ge R_s(f). \label{eq:Renyi_monotonicity}
\end{equation}
\end{lemma}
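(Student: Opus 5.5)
The plan is to prove that $\alpha\mapsto R_\alpha(f)$ is non-increasing on $(0,\infty)$, with $\alpha=1$ filled in by continuity as the Shannon entropy, by differentiating in $\alpha$ and reading off the sign of the derivative as a relative entropy. Throughout we restrict to the support $S=\{(i,j): f_{ij}>0\}$, since zero entries contribute nothing to $\sum_{i,j} f_{ij}^\alpha$ for $\alpha>0$. Set $Z(\alpha)=\sum_{(i,j)\in S} f_{ij}^\alpha$, which is smooth and strictly positive, and introduce the escort distribution $g^{(\alpha)}_{ij}=f_{ij}^\alpha/Z(\alpha)$.

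\textbf{Step 1: the derivative.} From the definition, $R_\alpha(f)=\frac{1}{1-\alpha}\ln Z(\alpha)$, and $Z'(\alpha)=\sum f_{ij}^\alpha\ln f_{ij}$. A routine computation gives
\begin{equation}
\frac{dR_\alpha}{d\alpha}
=\frac{1}{(1-\alpha)^2}\Big[\ln Z(\alpha)+(1-\alpha)\sum_{i,j} g^{(\alpha)}_{ij}\ln f_{ij}\Big].
\end{equation}

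\textbf{Step 2: identify the bracket as a relative entropy.} Since $\ln g^{(\alpha)}_{ij}=\alpha\ln f_{ij}-\ln Z(\alpha)$, we have
\begin{equation}
\sum_{i,j} g^{(\alpha)}_{ij}\ln\frac{g^{(\alpha)}_{ij}}{f_{ij}}=(\alpha-1)\sum_{i,j} g^{(\alpha)}_{ij}\ln f_{ij}-\ln Z(\alpha).
\end{equation}
This is exactly minus the bracket in Step~1. Hence
\begin{equation}
\frac{dR_\alpha}{d\alpha}=-\frac{D\big(g^{(\alpha)}\,\|\,f\big)}{(1-\alpha)^2}\le 0
\end{equation}
for every $\alpha\neq1$, by Gibbs' inequality $D\ge0$. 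This already gives monotonicity on each of $(0,1)$ and $(1,\infty)$.

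\textbf{Step 3: cross $\alpha=1$.} At $\alpha=1$ the expression $R_\alpha$ has a removable singularity. Because $Z(1)=1$, l'Hôpital gives $\lim_{\alpha\to1}R_\alpha=-\sum f_{ij}\ln f_{ij}$, so $R_\alpha$ extends continuously across $\alpha=1$. A continuous function on $(0,\infty)$ that is non-increasing on both sides of a point is non-increasing on the whole interval. Therefore $R_r(f)\ge R_s(f)$ for all $0<r<s$, including the case where $r<1<s$.

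\textbf{Main obstacle and an alternative.} The only delicate point is the $\alpha=1$ gluing, and continuity handles it cleanly. If one prefers to avoid calculus, an alternative is to write $\sum f^\alpha=\mathbb{E}_f[f^{\alpha-1}]$ and note that $R_\alpha=-\ln\|f\|_{L^{\alpha-1}(f)}$ is a generalized power mean in the exponent $\alpha-1$. Monotonicity of power means in their exponent, which follows from Jensen or Hölder, then gives the result directly, with the cases $\alpha-1<0$, $=0$ and $>0$ treated in the standard way.
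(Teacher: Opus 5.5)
Your proposal is correct and matches the paper's proof: both use the escort distribution $f^\alpha/Z(\alpha)$ and rewrite $\frac{d}{d\alpha}R_\alpha$ as $-D(g^{(\alpha)}\,\|\,f)/(1-\alpha)^2\le 0$. Your restriction to the support and your Step 3 (continuity of $R_\alpha$ at $\alpha=1$) are worthwhile additions. The paper only establishes the sign of the derivative for $\alpha\neq1$, yet it applies the lemma to compare $R_{1/2}$ with $R_{r/2}$ where $r/2>1$, which requires crossing $\alpha=1$.
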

The proof is provided in Appendix~\ref{app:renyi_lemma}. We now define the moment-based quantifier.

\begin{definition}[Moment-based lower-bound functional]
\label{def:Lr_MHQ}
For \(r>2\), define
\begin{align}
L_{r;A,B}(\rho)
&=
\frac{1}{2-r}
\left[
\ln\widetilde m_r
+
(1-r)\ln\widetilde m_2
\right]
\label{eq:Lr_definition}
\\
&=
\frac12
\left[
R_{r/2}(f)
+
\ln\widetilde m_2
\right].
\label{eq:Lr_Renyi_form}
\end{align}
\end{definition}

The second equality follows directly from
Eq.~\eqref{eq:Renyi_MHQ_moments} with
\(
\alpha=r/2
\).
We refer to \(L_{r;A,B}\) as a lower-bound functional rather
than a negativity measure, since it can take negative values
even when the underlying MHQ distribution is nonnegative. The usefulness of this construction follows from Rényi-entropy
monotonicity.

\begin{theorem}[Moment-based lower bound]
\label{thm:Lr_lower_bound}
For every \(r>2\),
\begin{equation}
L_{r;A,B}(\rho)
\leq
L_{A,B}(\rho).
\label{eq:Lr_lower_bound}
\end{equation}
\end{theorem}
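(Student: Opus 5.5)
The plan is to compare the two Rényi representations already in hand. Equation~\eqref{eq:L_from_Renyi_half} expresses the exact quantity as $L_{A,B}(\rho)=\tfrac12\bigl[R_{1/2}(f)+\ln\widetilde m_2\bigr]$. Equation~\eqref{eq:Lr_Renyi_form} expresses the functional as $L_{r;A,B}(\rho)=\tfrac12\bigl[R_{r/2}(f)+\ln\widetilde m_2\bigr]$. Both are built on the same normalized distribution $f_{ij}=|M_{ij}|^2/\widetilde m_2$ defined in Eq.~\eqref{eq:normalized_squared_MHQ}. The additive term $\tfrac12\ln\widetilde m_2$ is therefore common to both sides. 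Since $\widetilde m_2>0$, it is finite and cancels, so the theorem reduces to the single inequality $R_{r/2}(f)\le R_{1/2}(f)$.

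The first step is to confirm that Eq.~\eqref{eq:Lr_Renyi_form} really follows from the defining expression~\eqref{eq:Lr_definition}. With $\alpha=r/2$ one has
\begin{equation}
\sum_{i,j}f_{ij}^{r/2}=\widetilde m_r/\widetilde m_2^{\,r/2},
\end{equation}
and substituting into Eq.~\eqref{eq:Renyi_MHQ_moments} gives
\begin{equation}
\tfrac{2}{2-r}\bigl[\ln\widetilde m_r-\tfrac r2\ln\widetilde m_2\bigr].
\end{equation}
Halving this and adding $\tfrac12\ln\widetilde m_2$ yields $\tfrac{1}{2-r}\bigl[\ln\widetilde m_r+(1-r)\ln\widetilde m_2\bigr]$, which is the stated definition. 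The same computation at $\alpha=1/2$ is exactly Eq.~\eqref{eq:Renyi_half_log_negativity}.

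The second step applies Lemma~\ref{lemma:renyi} with orders $1/2<r/2$. This holds because $r>2$ gives $r/2>1$. The lemma then gives $R_{1/2}(f)\ge R_{r/2}(f)$. Adding $\ln\widetilde m_2$ to both sides and halving gives $L_{r;A,B}(\rho)\le L_{A,B}(\rho)$, as claimed.

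The only delicate point is that both orders must lie in the range where the Rényi formula~\eqref{eq:Renyi_entropy} is valid, namely $\alpha>0$ and $\alpha\neq1$. This holds, since $1/2$ and $r/2>1$ are on opposite sides of $1$. If Lemma~\ref{lemma:renyi} is taken to cover only orders on one side of $1$, the fallback is to insert the Shannon entropy, the $\alpha\to1$ limit, as an intermediate: $R_{1/2}\ge R_1\ge R_{r/2}$. Alternatively, one can prove the inequality directly: the map $\alpha\mapsto\ln\sum f_{ij}^\alpha$ is convex, which follows from Hölder's inequality, and it vanishes at $\alpha=1$. Its chord slope from the point $\alpha=1$ is therefore nondecreasing in $\alpha$, and that chord slope equals $-R_\alpha(f)$, so $R_\alpha(f)$ is nonincreasing. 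Beyond this, I expect no real obstacle; zero entries $f_{ij}=0$ contribute $0$ to every power sum with $\alpha>0$ and cause no issue.
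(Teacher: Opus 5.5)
Your proposal is correct and follows the paper's own proof: both use Eqs.~\eqref{eq:Lr_Renyi_form} and \eqref{eq:L_from_Renyi_half}, note that the common term $\tfrac12\ln\widetilde m_2$ reduces the claim to $R_{r/2}(f)\le R_{1/2}(f)$, and then invoke Lemma~\ref{lemma:renyi}. Your additional care about the two orders lying on opposite sides of $\alpha=1$ is a sound reinforcement, since the lemma's derivative proof formally passes through the removable singularity at $r=1$. Either of your fallbacks, the Shannon-entropy intermediate or the convexity and chord-slope argument, closes that point.
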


\begin{proof}
Since \(r>2\), one has
\(
r/2>1/2
\).
Equation~\eqref{eq:Renyi_monotonicity} therefore implies
\begin{equation*}
R_{r/2}(f)
\leq
R_{1/2}(f).
\end{equation*}
Adding
\(
\ln\widetilde m_2
\)
to both sides and dividing by \(2\), we obtain
\begin{equation*}
\frac12
\left[
R_{r/2}(f)
+
\ln\widetilde m_2
\right]
\leq
\frac12
\left[
R_{1/2}(f)
+
\ln\widetilde m_2
\right].
\end{equation*}
Using Eqs.~\eqref{eq:Lr_Renyi_form} and
\eqref{eq:L_from_Renyi_half} we get,
\begin{equation*}
L_{r;A,B}(\rho)
\leq
L_{A,B}(\rho).
\end{equation*}
This completes the proof.
\end{proof}

The same result may alternatively be obtained from the
interpolation inequality
$\widetilde m_2^{\,r-1}
\leq
\widetilde m_1^{\,r-2}
\widetilde m_r,$
but the Rényi-entropy formulation makes both the lower bound
and its ordering with \(r\) transparent. However, for completeness, we provide the alternative proof in Appendix~\ref{app:alternate_proof}.

Theorem~\ref{thm:Lr_lower_bound} shows that $L_r(\rho;A,B)$ provides a lower bound to the logarithmic MHQ negativity while depending only on two moments of the distribution avoiding the need for full MHQ reconstruction. We next show that these quantities also act as witnesses of MHQ negativity.

\begin{corollary}\label{corollary:quantitative_witness}
Let $M_{ij}(\rho;A,B)$ be the MHQ distribution corresponding to a quantum state $\rho$, and two observables $A$ and $B$. Then the following statements hold:
\begin{enumerate}
    \item If $M_{ij}(\rho;A,B) \ge 0$ for all $i,j$, then $L_{r;A,B}(\rho) \le 0$.
    
    \item If $L_{r;A,B}(\rho) > 0$, then the MHQ distribution is not a probability distribution.
\end{enumerate}
\end{corollary}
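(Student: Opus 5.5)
The plan is to obtain part 1 directly from Theorem~\ref{thm:Lr_lower_bound} combined with the faithfulness property in Proposition~\ref{prop:log_MHQ_properties}. Part 2 is then just the contrapositive of part 1.

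\textbf{Part 1.} Suppose $M_{ij}(\rho;A,B)\geq 0$ for all $i,j$. Then $\widetilde m_1=\sum_{i,j}M_{ij}=1$ by normalization, so Eq.~\eqref{eq:log_MHQ_faithfulness} (equivalently Eq.~\eqref{eq:m1_positivity}) gives $L_{A,B}(\rho)=\ln 1=0$. Theorem~\ref{thm:Lr_lower_bound} holds for every $r>2$, so
\begin{equation*}
L_{r;A,B}(\rho)\leq L_{A,B}(\rho)=0 .
\end{equation*}

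As a sanity check that avoids the Rényi machinery, I would also verify the claim directly from the interpolation inequality $\widetilde m_2^{\,r-1}\leq \widetilde m_1^{\,r-2}\widetilde m_r$. With $\widetilde m_1=1$ this reads $(r-1)\ln\widetilde m_2\leq \ln\widetilde m_r$, that is, $\ln\widetilde m_r+(1-r)\ln\widetilde m_2\geq 0$. Since the prefactor $1/(2-r)$ in Eq.~\eqref{eq:Lr_definition} is negative for $r>2$, this gives $L_{r;A,B}(\rho)\leq 0$.

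\textbf{Part 2.} This is the contrapositive of part 1. If $L_{r;A,B}(\rho)>0$, then some $M_{ij}<0$. Because the entries already sum to one, the only way they can fail to form a probability distribution is through such a negative entry, and this is exactly the nonclassicality certified here. Equivalently, $0<L_{r;A,B}\leq L_{A,B}$ forces $\widetilde m_1>1$, which by Eq.~\eqref{eq:m1_positivity} means negativity is present.

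I do not expect a genuine obstacle. The only point requiring care is the sign of $1/(2-r)$, which is negative for $r>2$, if one argues directly from the moments rather than through Theorem~\ref{thm:Lr_lower_bound}.
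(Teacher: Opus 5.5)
Your proposal is correct and follows the paper's argument: part 1 combines $L_{r;A,B}(\rho)\le L_{A,B}(\rho)$ from Theorem~\ref{thm:Lr_lower_bound} with faithfulness, and part 2 is the contrapositive, via $0<L_{r;A,B}\le L_{A,B}$. Your extra check through $\widetilde m_2^{\,r-1}\le\widetilde m_1^{\,r-2}\widetilde m_r$ (with the sign of $1/(2-r)$ handled correctly) is the paper's alternative proof in Appendix~\ref{app:alternate_proof}, specialized to $\widetilde m_1=1$.
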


\begin{proof}
From Theorem~\ref{thm:Lr_lower_bound} we have,
$L_{r;A,B}(\rho)
\leq
L_{A,B}(\rho).$
Since \(
L_{A,B}(\rho)=0
\) whenever the MHQ distribution is classical, we also have $L_{r;A,B}(\rho)\leq 0$. For the second statement, whenever $L_{r;A,B}(\rho)> 0$, it also implies
\(
L_{A,B}(\rho)>0
\),
which, by the faithfulness of logarithmic MHQ negativity,
is equivalent to the presence of a negative MHQ element.
\end{proof}

However the converse does not generally hold: an MHQ-negative
distribution may satisfy
\(
L_{r;A,B}\leq0
\)
for some $r$. Thus, \(L_{r;A,B}>0\) is a \emph{sufficient} condition for MHQ negativity.

\subsection{Even moment quantifiers} 
Absolute MHQ moments involve nonlinear functions of the individual quasiprobability elements and are therefore not, in general, represented by the ordinary multicopy power observables introduced in Sec.~\ref{s7}.
However, a particularly useful simplification occurs for even values of
\(r\). Since the MHQ distribution is real,
\begin{equation}
\widetilde m_{2n}
=
\sum_{i,j}|M_{ij}|^{2n}
=
\sum_{i,j}M_{ij}^{2n}
=
m_{2n}.
\label{eq:even_absolute_ordinary}
\end{equation}
Consequently,
\begin{equation}
L_{2n;A,B}(\rho)
=
\frac{1}{2-2n}
\left[
\ln m_{2n}
+
(1-2n)\ln m_2
\right],
\qquad
n\geq2,
\label{eq:even_L2n}
\end{equation}
depends only on the ordinary even-order MHQ moments. These
moments admit multicopy operator representations, as discussed
in Sec.~\ref{s7}, and therefore do not require direct access to
the signs of the individual MHQ elements.

Within this even-order family, the lowest nontrivial member
provides the largest lower bound.

\begin{proposition}[Ordering of the even-moment bounds]
\label{prop:L4_optimal_family}
For integers \(2\leq n<m\),
\begin{equation}
L_{2n;A,B}(\rho)
\geq
L_{2m;A,B}(\rho).
\label{eq:even_bound_ordering}
\end{equation}
In particular,
\begin{equation}
L_{4;A,B}(\rho)
\geq
L_{6;A,B}(\rho)
\geq
L_{8;A,B}(\rho)
\geq\cdots.
\label{eq:L4_hierarchy}
\end{equation}
\end{proposition}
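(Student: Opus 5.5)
The plan is to reduce the ordering to Rényi-entropy monotonicity, exactly as in the proof of Theorem~\ref{thm:Lr_lower_bound}. By Definition~\ref{def:Lr_MHQ}, specifically the Rényi form \eqref{eq:Lr_Renyi_form}, each even-order functional can be written as
\begin{equation*}
L_{2n;A,B}(\rho)=\frac12\left[R_{n}(f)+\ln\widetilde m_2\right],
\end{equation*}
where $f=\{f_{ij}\}$ is the normalized squared-MHQ distribution \eqref{eq:normalized_squared_MHQ}. The important point is that $f$ and $\widetilde m_2$ do not depend on $n$. The entire $n$-dependence of $L_{2n;A,B}$ is therefore carried by the Rényi order $\alpha=n$.

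Take integers $2\le n<m$. Both orders $n$ and $m$ are strictly greater than $1$, so they lie in the range where the Rényi entropy \eqref{eq:Renyi_entropy} is defined. Lemma~\ref{lemma:renyi}, applied with $r=n<s=m$, gives $R_n(f)\ge R_m(f)$. Adding the common term $\ln\widetilde m_2$ to both sides and dividing by $2$ yields $L_{2n;A,B}(\rho)\ge L_{2m;A,B}(\rho)$, which is \eqref{eq:even_bound_ordering}.

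The chain \eqref{eq:L4_hierarchy} then follows by taking consecutive values $n=2,3,4,\dots$. One can add, using Eq.~\eqref{eq:even_absolute_ordinary}, that each member depends only on the ordinary moments $m_{2n}$ and $m_2$, so $L_{4;A,B}$ is the strongest bound in the family.

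There is no real obstacle here. The only point needing care is checking that the Rényi form \eqref{eq:Lr_Renyi_form} is valid at $r=2n$. This holds because \eqref{eq:Renyi_MHQ_moments} with $\alpha=n$ gives $\frac{1}{1-n}\ln(\widetilde m_{2n}/\widetilde m_2^{\,n})$, and rearranging this reproduces \eqref{eq:even_L2n}. If one prefers not to use Lemma~\ref{lemma:renyi}, an alternative is to show directly that $\alpha\mapsto \frac{1}{1-\alpha}\ln\sum f_{ij}^\alpha$ is nonincreasing. This follows from the log-convexity of $\alpha\mapsto\sum f_{ij}^\alpha$, which is Hölder's inequality, combined with its value $1$ at $\alpha=1$.
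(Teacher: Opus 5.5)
Your proof is correct and is essentially the paper's own argument: you write $L_{2n;A,B}=\frac12\bigl[R_n(f)+\ln\widetilde m_2\bigr]$ via Eq.~\eqref{eq:Lr_Renyi_form}, invoke Lemma~\ref{lemma:renyi} to get $R_n(f)\ge R_m(f)$, and cancel the common $\ln\widetilde m_2$ term. Your check that the R\'enyi form holds at $r=2n$ and your H\"older/log-convexity alternative to the lemma are both sound but not needed.
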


\begin{proof}
Since \(n<m\), monotonicity of the Rényi entropy gives
$R_n(f)\geq R_m(f)$.
Using Eq.~\eqref{eq:Lr_Renyi_form},
\begin{align*}
L_{2n;A,B}(\rho)
&=
\frac12
\left[
R_n(f)+\ln\widetilde m_2
\right],
\\
L_{2m;A,B}(\rho)
&=
\frac12
\left[
R_m(f)+\ln\widetilde m_2
\right].
\end{align*}
The term
\(
\ln\widetilde m_2
\)
is common to both expressions, and therefore
\begin{equation*}
L_{2n;A,B}(\rho)
\geq
L_{2m;A,B}(\rho).
\end{equation*}
This completes the proof.
\end{proof}

The first member of this family is
\begin{equation}
L_{4;A,B}(\rho)
=
\frac32\ln m_2
-
\frac12\ln m_4.
\label{eq:L4_explicit}
\end{equation}
Proposition~\ref{prop:L4_optimal_family} shows that \(L_4\)
is the strongest lower bound among the particular family
\(\{L_{2n}\}_{n\geq2}\). However, we clarify that this statement should not be
interpreted as a global optimality result: a stronger
certificate may be obtained from a different function of the
same moments, from several moments used jointly, or from
additional information about the state and measurements.

The results above establish a direct connection between
finite moment data and the total MHQ negativity. The exact
quantity \(L_{A,B}(\rho)\) measures the full absolute weight
of the quasiprobability distribution, whereas
\(L_{2n;A,B}(\rho)\) provides a certified lower bound using
only the moments \(m_2\) and \(m_{2n}\). In the next
subsection, we exploit the geometry of qubit MHQ
distributions to sharpen the generic witness
\(L_{4;A,B}\) for dichotomic measurements.

\section{Bounds on qubit MHQ negativity}
\label{s5}

The condition
\(
L_4(\rho;A,B)>0
\)
is sufficient to certify negativity of the MHQ distribution. For qubits, this condition can
be sharpened considerably by determining the largest value of
\(L_4\) that is compatible with an everywhere nonnegative MHQ
distribution for a fixed pair of measurements. A general qubit state can be written as
\begin{equation}
\rho
=
\frac{1}{2}
\left(
\mathbb{I}
+
\boldsymbol{r}\cdot\boldsymbol{\sigma}
\right),
\qquad
|\boldsymbol{r}|\leq 1,
\label{eq:qubit_bloch_state}
\end{equation}
where
\(
\boldsymbol{r}\in\mathbb{R}^{3}
\)
is the Bloch vector and
\(
\boldsymbol{\sigma}
=
(\sigma_x,\sigma_y,\sigma_z)
\).

We consider two dichotomic projective observables
\begin{equation}
A
=
\widehat{\boldsymbol{a}}\cdot\boldsymbol{\sigma},
\qquad
B
=
\widehat{\boldsymbol{b}}\cdot\boldsymbol{\sigma},
\label{eq:qubit_observables}
\end{equation}
with outcomes \(s,t\in\{+1,-1\}\) and projectors
\begin{equation}
\Pi_s^A
=
\frac{1}{2}
\left(
\mathbb{I}
+
s\,\widehat{\boldsymbol{a}}
\cdot\boldsymbol{\sigma}
\right),
\qquad
\Pi_t^B
=
\frac{1}{2}
\left(
\mathbb{I}
+
t\,\widehat{\boldsymbol{b}}
\cdot\boldsymbol{\sigma}
\right).
\label{eq:qubit_projectors}
\end{equation}
The corresponding MHQ elements are
\begin{equation}
M_{st}(\rho;A,B)
=
\frac{1}{2}
\operatorname{Tr}
\left[
\left(
\Pi_s^A\Pi_t^B
+
\Pi_t^B\Pi_s^A
\right)
\rho
\right].
\label{eq:qubit_MHQ_definition}
\end{equation}
Using
\begin{equation}
\left(
\widehat{\boldsymbol{a}}\cdot\boldsymbol{\sigma}
\right)
\left(
\widehat{\boldsymbol{b}}\cdot\boldsymbol{\sigma}
\right)
=
\left(
\widehat{\boldsymbol{a}}
\cdot
\widehat{\boldsymbol{b}}
\right)\mathbb{I}
+
i
\left(
\widehat{\boldsymbol{a}}
\times
\widehat{\boldsymbol{b}}
\right)
\cdot\boldsymbol{\sigma},
\end{equation}
one obtains
\begin{equation}
M_{st}
=
\frac{1}{4}
\left[
1
+
s\,\widehat{\boldsymbol{a}}\cdot\boldsymbol{r}
+
t\,\widehat{\boldsymbol{b}}\cdot\boldsymbol{r}
+
st\,
\widehat{\boldsymbol{a}}\cdot\widehat{\boldsymbol{b}}
\right].
\label{eq:general_qubit_MHQ}
\end{equation}
The component of \(\boldsymbol{r}\) orthogonal to the plane
spanned by
\(\widehat{\boldsymbol{a}}\) and
\(\widehat{\boldsymbol{b}}\)
does not contribute to Eq.~\eqref{eq:general_qubit_MHQ}.
We may therefore choose coordinates such that
\begin{equation}
A=\sigma_z,
\qquad
B(\phi)
=
\cos(2\phi)\,\sigma_z
+
\sin(2\phi)\,\sigma_x,
\label{eq:canonical_qubit_measurements}
\end{equation}
where
$\widehat{\boldsymbol{a}}
\cdot
\widehat{\boldsymbol{b}}
=
\cos(2\phi),
\;
0\leq\phi\leq{\pi}/{2}.$
It is then sufficient to take
\begin{equation}
\rho
=
\frac{1}{2}
\left(
\mathbb{I}
+
r_x\sigma_x
+
r_z\sigma_z
\right),
\qquad
r_x^2+r_z^2\leq1.
\label{eq:planar_qubit_state}
\end{equation}
For convenience, we define
\begin{equation}
a=\cos^2\phi,
\quad
b=\sin^2\phi,
\quad
a+b=1.
\label{eq:a_b_definition}
\end{equation}
The four MHQ elements become
\begin{align}
M_{++}
&=
\frac{a}{2}(1+r_z)
+
\frac{\sqrt{ab}}{2}r_x,
\label{eq:Mpp}
\\
M_{+-}
&=
\frac{b}{2}(1+r_z)
-
\frac{\sqrt{ab}}{2}r_x,
\label{eq:Mpm}
\\
M_{-+}
&=
\frac{b}{2}(1-r_z)
+
\frac{\sqrt{ab}}{2}r_x,
\label{eq:Mmp}
\\
M_{--}
&=
\frac{a}{2}(1-r_z)
-
\frac{\sqrt{ab}}{2}r_x.
\label{eq:Mmm}
\end{align}
They satisfy
\begin{equation}
M_{++}+M_{+-}+M_{-+}+M_{--}=1
\end{equation}
and the additional relations
\begin{equation}
M_{++}+M_{--}=a,
\qquad
M_{+-}+M_{-+}=b.
\label{eq:MHQ_pair_sums}
\end{equation}

To determine the largest value of \(L_4\) compatible with
MHQ positivity, it is useful to make the geometry of the
allowed distributions explicit. Introduce
\begin{equation}
u=M_{++},
\qquad
v=M_{+-}.
\end{equation}
Equation~\eqref{eq:MHQ_pair_sums} then gives
\begin{equation}
\boldsymbol{M}(u,v)
=
\left(
u,\,
v,\,
b-v,\,
a-u
\right).
\label{eq:MHQ_uv_parameterization}
\end{equation}
Positivity of all four MHQ elements is equivalent to
\begin{equation}
0\leq u\leq a,
\qquad
0\leq v\leq b.
\label{eq:uv_rectangle}
\end{equation}
For $0<\phi<\pi/2$, inversion of Eqs.~\eqref{eq:Mpp}--\eqref{eq:Mmm}
gives
\begin{equation}
r_z
=
2(u+v)-1,
\qquad
r_x
=
\frac{2(bu-av)}{\sqrt{ab}}.
\label{eq:bloch_from_uv}
\end{equation}
The endpoint cases $\phi=0,\pi/2$, corresponding to the commuting measurements, are obtained separately, with $k(0) = k(\pi/2) = 0$, where $k(\phi)$ denotes the optimal fourth-moment threshold introduced below in Eq.~\eqref{eq:kphi_definition}.
The Bloch-disk condition becomes
\begin{equation}
r_x^2+r_z^2\leq1
\quad\Longleftrightarrow\quad
\frac{u^2}{a}
+
\frac{v^2}{b}
\leq
u+v.
\label{eq:bloch_uv_condition}
\end{equation}
For \(0\leq u\leq a\) and \(0\leq v\leq b\),
\[
\frac{u^2}{a}\leq u,
\qquad
\frac{v^2}{b}\leq v,
\]
so Eq.~\eqref{eq:bloch_uv_condition} is automatically
satisfied. Thus, for fixed \(\phi\), the set of all
MHQ-positive qubit distributions is represented exactly by
the rectangle in Eq.~\eqref{eq:uv_rectangle}.

\subsection{A universal bound on qubit MHQ negativity}
\label{subsec:universal_qubit_bound}

Before specializing to the fourth-moment threshold, we record a bound
on the exact logarithmic MHQ negativity $L_{A,B}(\rho)$ itself, valid
for \emph{every} qubit state and \emph{every} pair of dichotomic
observables. This sharpens the generic bound $L\leq\ln d=\ln2$ of
Proposition~\ref{prop:log_MHQ_properties}.

\begin{proposition}[Universal qubit bound]
\label{prop:universal_qubit_bound}
For every qubit state $\rho$ and every pair of dichotomic
observables $A=\hat{\boldsymbol a}\cdot\boldsymbol\sigma$,
$B=\hat{\boldsymbol b}\cdot\boldsymbol\sigma$,
\begin{equation}
L_{A,B}(\rho)\leq\ln\frac54.
\label{eq:universal_qubit_bound}
\end{equation}
The bound is tight. Equality holds, up to unitary covariance and outcome relabeling, when $\hat{\boldsymbol a}\cdot\hat{\boldsymbol
b}=-\tfrac12$, realized by the pure state with Bloch vector
$\boldsymbol r=(\hat{\boldsymbol a}+\hat{\boldsymbol
b})/|\hat{\boldsymbol a}+\hat{\boldsymbol b}|$; the case $\hat{\boldsymbol a}\cdot\hat{\boldsymbol
b}=+\tfrac12$, with Bloch vector $\boldsymbol r=(\hat{\boldsymbol a}-\hat{\boldsymbol
b})/|\hat{\boldsymbol a}-\hat{\boldsymbol b}|$ gives an equivalent maximizer under outcome relabeling.
\end{proposition}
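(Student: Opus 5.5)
The plan is to work directly with the Bloch-form expression Eq.~\eqref{eq:general_qubit_MHQ}. Write $x=\hat{\boldsymbol a}\cdot\boldsymbol r$, $y=\hat{\boldsymbol b}\cdot\boldsymbol r$ and $c=\hat{\boldsymbol a}\cdot\hat{\boldsymbol b}$, so that $M_{st}=\tfrac14[1+sx+ty+stc]$. By Eq.~\eqref{eq:MHQ_l1_negative_weight} we have $L_{A,B}(\rho)=\ln(1+2\mathcal N_{\mathrm{MHQ}})$. It therefore suffices to show $\mathcal N_{\mathrm{MHQ}}\le 1/8$ and to exhibit a state for which equality holds.

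\textbf{Step 1: at most one entry is negative.} Every pair of the four entries shares a row, a column, the diagonal, or the antidiagonal. Each such pair sum is nonnegative:
\begin{align*}
M_{s+}+M_{s-}&=\tfrac12(1+sx)\ge0, & M_{+t}+M_{-t}&=\tfrac12(1+ty)\ge0,\\
M_{++}+M_{--}&=\tfrac12(1+c)\ge0, & M_{+-}+M_{-+}&=\tfrac12(1-c)\ge0,
\end{align*}
using $|x|,|y|,|c|\le1$. Hence no two entries can be negative simultaneously, and
\begin{equation*}
\mathcal N_{\mathrm{MHQ}}=\max\bigl\{0,\,-\min_{s,t}M_{st}\bigr\}.
\end{equation*}

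\textbf{Step 2: minimize a single entry.} By outcome relabeling (Proposition~\ref{prop:log_MHQ_properties}, item 3) we may take the negative entry to be $M_{--}=\tfrac14[1+c-(\hat{\boldsymbol a}+\hat{\boldsymbol b})\cdot\boldsymbol r]$. For fixed $c$, this is minimized over the Bloch ball by choosing $\boldsymbol r=(\hat{\boldsymbol a}+\hat{\boldsymbol b})/|\hat{\boldsymbol a}+\hat{\boldsymbol b}|$, a pure state. Since $|\hat{\boldsymbol a}+\hat{\boldsymbol b}|=\sqrt{2(1+c)}$, set $w=\sqrt{2(1+c)}\in[0,2]$. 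The minimum value is then
\begin{equation*}
\tfrac14\Bigl(\tfrac{w^2}{2}-w\Bigr)\ \ge\ -\tfrac18,
\end{equation*}
with equality exactly at $w=1$, i.e.\ $c=-\tfrac12$. This gives $\mathcal N_{\mathrm{MHQ}}\le\tfrac18$, hence $L_{A,B}(\rho)\le\ln\tfrac54$, with equality realized by the stated pure state. (If $\hat{\boldsymbol a}+\hat{\boldsymbol b}=0$ any $\boldsymbol r$ works, but then $c=-1$ and nothing is negative.)

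\textbf{Step 3: the other labeling.} The antidiagonal entries $M_{+-}=\tfrac14[1+(\hat{\boldsymbol a}-\hat{\boldsymbol b})\cdot\boldsymbol r-c]$ and $M_{-+}$ are treated identically, with $c\to-c$ and $\hat{\boldsymbol b}\to-\hat{\boldsymbol b}$, using $|\hat{\boldsymbol a}-\hat{\boldsymbol b}|=\sqrt{2(1-c)}$. This yields the optimum $c=+\tfrac12$ with $\boldsymbol r$ along $\pm(\hat{\boldsymbol a}-\hat{\boldsymbol b})$. This is the stated equivalent maximizer up to relabeling $t\to-t$ (which flips the sign convention of the chosen entry, consistent with the sign in the statement).

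\textbf{Main obstacle.} The only genuinely nontrivial point is Step 1: the observation that at most one MHQ entry can be negative, which reduces the $\ell_1$ maximization to a single linear minimization. Once that is in place, the remainder is a one-variable quadratic optimization, and the tightness and extremal-state claims follow directly from the equality conditions.
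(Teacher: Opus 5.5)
Your proof is correct, and it reaches the bound by a different reduction than the paper.

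\textbf{The paper's route.} The paper sets $u=\boldsymbol r\cdot(\hat{\boldsymbol a}+\hat{\boldsymbol b})$, $v=\boldsymbol r\cdot(\hat{\boldsymbol a}-\hat{\boldsymbol b})$ and $A_\pm=1\pm c$. The pairing identity $|x+y|+|x-y|=2\max(x,|y|)$ then gives
$\sum_{s,t}|M_{st}|=\tfrac12[\max(A_+,|u|)+\max(A_-,|v|)]$.
Projecting the Bloch constraint onto the orthogonal pair $\hat{\boldsymbol a}\pm\hat{\boldsymbol b}$ yields the ellipse $u^2/(2A_+)+v^2/(2A_-)\le1$. The paper then runs a four-region case analysis, and handles the region $|u|>A_+$, $|v|>A_-$ by Cauchy--Schwarz.

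\textbf{Your route.} Your Step~1 observes that every pair of entries sums either to a Born marginal or to $\tfrac12(1\pm c)$. This shows that the paper's fourth region is actually empty, since it would require two negative entries at once. You therefore replace the case analysis by the exact formula $\mathcal N_{\mathrm{MHQ}}=\max\{0,-\min_{s,t}M_{st}\}$ and a single linear minimization over the Bloch ball. The one-variable problem you end up with, minimizing $w^2/2-w$ with $w=|\hat{\boldsymbol a}+\hat{\boldsymbol b}|$, is exactly the paper's case~(ii) computation with $z=\sqrt{2A_+}$.

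\textbf{Comparison.} Your argument is shorter and isolates a clean structural fact: for dichotomic qubit observables, the MHQ distribution has at most one negative entry. It also gives the equality conditions directly, from uniqueness of the maximizer of a nonzero linear functional on the ball. The paper's pairing identity instead expresses the total absolute weight directly through the two orthogonal projections $u$ and $v$. That makes the elliptical geometry of the problem explicit, which is the structure its case analysis exploits.
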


\begin{proof}
From Eq.~\eqref{eq:general_qubit_MHQ}, we have
\begin{equation}
M_{st}=\frac14\Big[1+st\,c+\boldsymbol r\cdot(s\hat{\boldsymbol
a}+t\hat{\boldsymbol b})\Big],\qquad c=\hat{\boldsymbol
a}\cdot\hat{\boldsymbol b}.
\end{equation}
Introducing $A_\pm=1\pm c\geq0$, so that $A_++A_-=2$, together with
\begin{equation}
u=\boldsymbol r\cdot(\hat{\boldsymbol a}+\hat{\boldsymbol b}),
\qquad
v=\boldsymbol r\cdot(\hat{\boldsymbol a}-\hat{\boldsymbol b}),
\end{equation}
then $M_{++}=(A_++u)/4$, $M_{--}=(A_+-u)/4$,
$M_{+-}=(A_-+v)/4$, $M_{-+}=(A_--v)/4$. Using the elementary
identity $|x+y|+|x-y|=2\max(x,|y|)$ for $x\geq0$,
\begin{equation}
\sum_{s,t}|M_{st}|
=
\frac12\Big[\max(A_+,|u|)+\max(A_-,|v|)\Big].
\label{eq:pairing_identity}
\end{equation}
Since $(\hat{\boldsymbol a}+\hat{\boldsymbol
b})\perp(\hat{\boldsymbol a}-\hat{\boldsymbol b})$ (their dot
product is $|\hat{\boldsymbol a}|^2-|\hat{\boldsymbol b}|^2=0$), and
$|\hat{\boldsymbol a}\pm\hat{\boldsymbol b}|^2=2A_\pm$, the Bloch
constraint $|\boldsymbol r|\leq1$ projected onto this orthogonal
pair gives
\begin{equation}
\frac{u^2}{2A_+}+\frac{v^2}{2A_-}\leq1.
\label{eq:ellipse_constraint}
\end{equation}
With $x=|u|$, $y=|v|$, $A=A_+$, and $B=A_-$, we have $A+B=2$ and
$x^2/(2A)+y^2/(2B)\leq1$. We now bound $\max(A,x)+\max(B,y)$ in the four
regions determined by whether $x\gtrless A$ and $y\gtrless B$:

\emph{(i)} $x\leq A,\ y\leq B$: the sum equals $A+B=2$.

\emph{(ii)} $x>A,\ y\leq B$: the sum is $x+B$. From
Eq.~\eqref{eq:ellipse_constraint}, $x\leq\sqrt{2A}$, so with
$z=\sqrt{2A}$ and $B=2-A=2-z^2/2$,
\begin{equation}
x+B\leq z+2-\frac{z^2}{2}=\frac52-\frac12(z-1)^2\leq\frac52,
\end{equation}
with equality at $z=1$, i.e.\ $A=1/2$.

\emph{(iii)} $x\leq A,\ y>B$: symmetric to (ii), giving the same
bound $5/2$, attained at $B=1/2$.

\emph{(iv)} $x>A,\ y>B$: the sum is $x+y$. By Cauchy--Schwarz,
\begin{align}
x+y &=\sqrt A\cdot\frac{x}{\sqrt A}+\sqrt B\cdot\frac{y}{\sqrt B} \nonumber \\
&\leq\sqrt{(A+B)\Big(\frac{x^2}{A}+\frac{y^2}{B}\Big)}
\leq 2.
\end{align}

In every case $\max(A,x)+\max(B,y)\leq5/2$, so by
Eq.~\eqref{eq:pairing_identity}, $\sum_{s,t}|M_{st}|\leq5/4$ and
hence $L_{A,B}(\rho)\leq\ln(5/4)$. Equality requires either
$A_+=1/2$ (i.e.\ $c=-1/2$), realized by the pure state with Bloch
vector $\boldsymbol r=(\hat{\boldsymbol a}+\hat{\boldsymbol
b})/|\hat{\boldsymbol a}+\hat{\boldsymbol b}|$, or, symmetrically,
$A_-=1/2$ (i.e.\ $c=+1/2$), realized by the pure state with Bloch
vector $\boldsymbol r=(\hat{\boldsymbol a}-\hat{\boldsymbol
b})/|\hat{\boldsymbol a}-\hat{\boldsymbol b}|$. These two solutions
are related by outcome relabeling.
\end{proof}

For the canonical measurement pair of
Eq.~\eqref{eq:canonical_qubit_measurements}, equality is realized at
$\phi=\pi/6$ (or $\phi=\pi/3$, related by outcome relabeling of $B$)
by the pure state with Bloch vector $\boldsymbol r=\hat{\boldsymbol
a}-\hat{\boldsymbol b}=(-\sqrt3/2,0,1/2)$, i.e.\
\begin{equation}
\rho_*
=
\begin{pmatrix}
3/4 & -\sqrt3/4\\
-\sqrt3/4 & 1/4
\end{pmatrix},
\end{equation}
which is exactly the state used in the numerical illustration of
Sec.~\ref{subsec:shadow_numerics}. 
This resolves the origin
of the reported value $L\simeq0.2231\approx\ln(5/4)$ in Eq.~\eqref{eq:qubit_exact_values}.


\subsection{Fourth moment threshold}
The fourth-order moment quantifier is
\begin{equation}
L_4(\rho;A,B)
=
\frac{3}{2}\ln m_2
-
\frac{1}{2}\ln m_4,
\label{eq:L4_qubit}
\end{equation}
where
\begin{equation}
m_n
=
\sum_{s,t}
M_{st}^{\,n}.
\end{equation}
We define the measurement-dependent positivity threshold
\begin{equation}
k(\phi)
=
\max_{\substack{\rho:\\
M_{st}(\rho;A,B)\geq0}}
L_4(\rho;A,B).
\label{eq:kphi_definition}
\end{equation}
By construction,
\begin{equation}
L_4(\rho;A,B)>k(\phi)
\quad\Longrightarrow\quad
\exists\,s,t:
M_{st}(\rho;A,B)<0.
\label{eq:refined_qubit_witness}
\end{equation}

To perform the optimization, write
\begin{equation}
u=\frac{a}{2}+x,
\qquad
v=\frac{b}{2}+y,
\end{equation}
with
\begin{equation}
-\frac{a}{2}\leq x\leq\frac{a}{2},
\qquad
-\frac{b}{2}\leq y\leq\frac{b}{2}.
\end{equation}
The distribution then takes the symmetric form
\begin{equation}
\boldsymbol{M}(x,y)
=
\left(
\frac{a}{2}+x,\,
\frac{b}{2}+y,\,
\frac{b}{2}-y,\,
\frac{a}{2}-x
\right).
\label{eq:symmetric_MHQ_parameterization}
\end{equation}
Consequently,
\begin{align}
m_2
&=
\frac{a^2+b^2}{2}
+
2(x^2+y^2),
\label{eq:m2_xy}
\\
m_4
&=
\frac{a^4+b^4}{8}
+
3a^2x^2
+
3b^2y^2
+
2(x^4+y^4).
\label{eq:m4_xy}
\end{align}
Hence \(L_4\) depends only on \(x^2\) and \(y^2\). Before proceeding further, we state a lemma that will help us to find exact expressions for $k(\phi).$

\begin{lemma}[Vertex domination]
\label{lemma:vertex_domination}
Let $a,b>0$ with $a+b=1$. On the rectangle $0\le X\le A=a^2/4$,
$0\le Y\le B=b^2/4$, the function
\begin{equation}
F(X,Y)=\frac{m_2(X,Y)^3}{m_4(X,Y)}
\end{equation}
attains its maximum at one of the four corners of the rectangle.
\end{lemma}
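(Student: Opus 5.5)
The plan is to reduce the two-variable maximization to one-variable problems along segments of the rectangle. I will show that for each fixed $Y\in[0,B]$ the map $X\mapsto F(X,Y)$ attains its maximum on $[0,A]$ at an endpoint $X\in\{0,A\}$, and symmetrically for fixed $X$. The lemma then follows by a standard two-step argument. Let $(X_0,Y_0)$ be a maximizer, which exists by compactness since $m_4>0$. Moving $X_0$ to the better endpoint of the horizontal segment through it does not decrease $F$. Then moving $Y_0$ to the better endpoint of the resulting vertical edge again does not decrease $F$. This produces a corner at which $F$ is at least $F(X_0,Y_0)$, so the maximum is attained there.

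For the one-variable step, I use Eqs.~\eqref{eq:m2_xy}--\eqref{eq:m4_xy} with $X=x^2$, $Y=y^2$. At fixed $Y$ I write
\begin{equation*}
m_2=c+2X,\qquad m_4=q(X)=d+eX+2X^2,
\end{equation*}
where
\begin{equation*}
c=\tfrac{a^2+b^2}{2}+2Y,\qquad e=3a^2,\qquad d=\tfrac{a^4+b^4}{8}+3b^2Y+2Y^2 .
\end{equation*}
Differentiating $\ln F=3\ln(c+2X)-\ln q(X)$ shows that $\partial_X F$ has the sign of
\begin{equation*}
h(X)=6q(X)-(c+2X)q'(X)=4X^2+4(e-c)X+(6d-ce).
\end{equation*}
This is an upward parabola, so along the segment $F$ is increasing, then decreasing, then increasing, with either piece possibly absent. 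An interior local maximum can occur only at the smaller root $X_1$ of $h$, and only when $h$ changes sign from $+$ to $-$ there. The case $b\leftrightarrow a$, $X\leftrightarrow Y$ is identical by symmetry.

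The key step, and the one I expect to be the main obstacle, is to exclude $X_1\in(0,A)$ with $h$ positive to its left. Using the explicit coefficients, the root sum is $c-e=\tfrac{b^2-5a^2}{2}+2Y$ and the product is $(6d-ce)/4$. I would try to show, for all $Y\in[0,b^2/4]$ with $a+b=1$, that at least one of the following holds:
\begin{enumerate}
\item $6d-ce\le 0$, so that $X_1\le 0$;
\item $c-e\le 0$ and $6d-ce\ge 0$, so that both roots are nonpositive;
\item $h$ has no real roots.
\end{enumerate}
If one of these holds, then on $[0,A]$ the function $h$ is either of one sign or passes from $-$ to $+$, and $F$ has no interior maximum. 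I will first simply expand $6d-ce$ as a polynomial in $Y$, $a$, and $b$ and check its sign on the allowed range.

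If this sign analysis fails on some parameter region, so that a genuine interior critical point $X_1<A$ exists, I will fall back on a direct comparison. At $X_1$ one has $q'(X_1)/q(X_1)=6/(c+2X_1)$, so $F(X_1,Y)$ can be written in closed form. I would then prove $F(X_1,Y)\le\max\{F(0,Y),F(A,Y)\}$ by clearing denominators and showing that the resulting polynomial in $(a,Y)$ is nonnegative, using $X_1\le a^2/4$ and $Y\le b^2/4$. Finally, the degenerate endpoint cases $a\to0$ and $b\to0$ are excluded by the hypothesis $a,b>0$.
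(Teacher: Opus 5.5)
Your reduction to axis-parallel segments is sound, and the claim it needs is true. As written, though, the proposal stops at the step that carries the content, and the first check you plan would not settle it: $6d-ce=-\tfrac34a^4-\tfrac32a^2b^2+\tfrac34b^4+(18b^2-6a^2)Y+12Y^2$ changes sign on the allowed range. It is negative at $Y=0$, $a=b$, and positive at $Y=0$ for small $a$.

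The step closes through the discriminant. The discriminant of $h$ equals $16\,\delta(Y)$ with
\[
\delta(Y)=7a^4-a^2b^2-\tfrac12 b^4-(4a^2+16b^2)Y-8Y^2 ,
\]
which is strictly decreasing for $Y\ge0$. Meanwhile $c-e=\tfrac12(b^2-5a^2)+2Y$ vanishes at $Y_*=\tfrac14(5a^2-b^2)$.
\begin{itemize}
\item If $b^2<5a^2$, then $\delta(Y_*)=-\tfrac{21}{2}a^4-3b^2(5a^2-b^2)<0$. Since $c-e>0$ means $Y>Y_*$, monotonicity gives $\delta(Y)<0$ there.
\item If $b^2\ge5a^2$, then $\delta(Y)\le\delta(0)\le-\tfrac{21}{2}a^4<0$ for every $Y\ge0$.
\end{itemize}
So whenever $h$ has real roots you have $c-e\le0$, hence $X_1\le\tfrac12(c-e)\le0$. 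That is all you need; your conditions 1 and 2 matter only through this consequence. It holds for every $Y\ge0$, with no use of $Y\le b^2/4$ or $a+b=1$.

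You can also drop the fallback comparison. It could not work in the case it was designed for: an $X_1\in(0,A)$ with $X_2\ge A$ would make $F(X_1,Y)$ strictly larger than both endpoint values.

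With this filled in, your argument is a genuinely different and leaner route than the paper's.
\begin{itemize}
\item The paper passes to $p=a^2/2+2X$, $q=b^2/2+2Y$. It locates the unique candidate interior critical point of $F$ on the line $f'(p)=g'(q)$ and shows it is a strict local minimum of $F$ restricted to that line, so not a local maximum. It then runs a one-variable quadratic sign analysis only on the four edges.
\item Your $h$ is exactly the paper's edge polynomial $\psi$, with the roles of the two coordinates exchanged and $p=a^2/2+2X$, but with the fixed coordinate left free rather than pinned at an edge value.
\end{itemize}
Because that discriminant is monotone in the fixed coordinate, the edge analysis extends to every axis-parallel segment. Your push-to-a-corner argument then makes the two-dimensional critical-point analysis unnecessary. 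Your version also proves the stronger statement that $F$ has no interior maximum on any axis-parallel segment of the rectangle. The paper's version additionally pins down where the interior critical point sits, which the lemma itself does not require.
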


The proof is given in Appendix~\ref{app:vertex_domination}. 
By Lemma~\ref{lemma:vertex_domination}, this maximum is attained at one of the four vertices of the rectangle, and the corresponding MHQ distributions reduce, up to permutations, to the following three distinct families:
\begin{align}
\boldsymbol{M}_0
&=
\{a,b,0,0\},
\label{eq:M_family_0}
\\
\boldsymbol{M}_a
&=
\left\{
\frac{a}{2},
\frac{a}{2},
b,
0
\right\},
\label{eq:M_family_a}
\\
\boldsymbol{M}_b
&=
\left\{
a,
\frac{b}{2},
\frac{b}{2},
0
\right\}.
\label{eq:M_family_b}
\end{align}
The central distribution
\(
\{a/2,b/2,b/2,a/2\}
\)
gives the same value of \(L_4\) as
\(\boldsymbol{M}_0\); this follows because splitting each
entry into two equal parts rescales \(m_2\) and \(m_4\) in
such a way that the combination in
Eq.~\eqref{eq:L4_qubit} remains unchanged.

Evaluating \(L_4\) on the three families gives
\begin{align}
K_0(\phi)
&=
\frac{3}{2}
\ln\!\left(a^2+b^2\right)
-
\frac{1}{2}
\ln\!\left(a^4+b^4\right),
\label{eq:K0}
\\
K_a(\phi)
&=
\frac{3}{2}
\ln\!\left(
\frac{a^2}{2}+b^2
\right)
-
\frac{1}{2}
\ln\!\left(
\frac{a^4}{8}+b^4
\right),
\label{eq:Ka}
\\
K_b(\phi)
&=
\frac{3}{2}
\ln\!\left(
a^2+\frac{b^2}{2}
\right)
-
\frac{1}{2}
\ln\!\left(
a^4+\frac{b^4}{8}
\right).
\label{eq:Kb}
\end{align}

We can therefore state the optimized qubit criterion as
follows.

\begin{proposition}
\label{prop:optimal_qubit_L4_threshold}
For a qubit state \(\rho\) and the pair of dichotomic
observables in Eq.~\eqref{eq:canonical_qubit_measurements},
the largest value of \(L_4\) compatible with an everywhere
nonnegative MHQ distribution is
\begin{equation}
k(\phi)
=
\max
\left\{
K_a(\phi),
K_0(\phi),
K_b(\phi)
\right\}.
\label{eq:kphi_max}
\end{equation}
Consequently,
\begin{equation}
L_4(\rho;A,B)>k(\phi)
\end{equation}
certifies negativity of the corresponding MHQ distribution.
The threshold is optimal among certification rules based
only on the value of \(L_4\): no smaller threshold is valid
for all MHQ-positive qubit distributions at the same
measurement angle \(\phi\).
\end{proposition}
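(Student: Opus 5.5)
The plan is to reduce the maximization defining $k(\phi)$ in Eq.~\eqref{eq:kphi_definition} to a finite comparison. We treat the interior range $0<\phi<\pi/2$ here; the commuting endpoints are handled separately, as already stated in the text.

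\emph{Step 1: replace states by points of the rectangle.} The feasible set $\{\rho: M_{st}(\rho;A,B)\geq0\}$ is mapped exactly onto the rectangle \eqref{eq:uv_rectangle}. Every feasible qubit state gives a point $(u,v)$ in it. Conversely, by Eq.~\eqref{eq:bloch_from_uv} and the observation after Eq.~\eqref{eq:bloch_uv_condition}, every point of the rectangle comes from a legitimate Bloch vector. Hence
\[
k(\phi)=\max_{(u,v)\in\text{rectangle}}\Big[\tfrac32\ln m_2-\tfrac12\ln m_4\Big],
\]
and this maximum is attained by continuity on a compact set.

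\emph{Step 2: reduce to $F$ on a smaller rectangle.} Pass to the symmetric coordinates of Eq.~\eqref{eq:symmetric_MHQ_parameterization}. By Eqs.~\eqref{eq:m2_xy}--\eqref{eq:m4_xy}, $m_2$ and $m_4$ depend only on $X=x^2\in[0,a^2/4]$ and $Y=y^2\in[0,b^2/4]$. We also have $L_4=\tfrac12\ln(m_2^3/m_4)=\tfrac12\ln F(X,Y)$, and $\ln$ is monotone. So maximizing $L_4$ over the $(x,y)$ rectangle is the same as maximizing $F$ over the $(X,Y)$ rectangle. Lemma~\ref{lemma:vertex_domination} then places the maximizer at one of the four corners $(X,Y)\in\{0,a^2/4\}\times\{0,b^2/4\}$.

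\emph{Step 3: evaluate the corners.}
\begin{itemize}
\item $(a^2/4,b^2/4)$ gives, up to permutation, $\boldsymbol M_0=\{a,b,0,0\}$.
\item $(0,b^2/4)$ gives $\{a/2,b,0,a/2\}$, which is $\boldsymbol M_a$.
\item $(a^2/4,0)$ gives $\boldsymbol M_b$.
\item $(0,0)$ gives the central distribution. Its $m_2$ is scaled by $1/2$ and its $m_4$ by $1/8$ relative to $\boldsymbol M_0$, so $m_2^3/m_4$ is unchanged and it reproduces $K_0$.
\end{itemize}
Since $m_n$ is permutation invariant, direct substitution yields Eqs.~\eqref{eq:K0}--\eqref{eq:Kb}. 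Therefore $k(\phi)=\max\{K_a,K_0,K_b\}$.

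\emph{Step 4: certification and optimality.} The certification statement is immediate from the definition as a maximum, via Eq.~\eqref{eq:refined_qubit_witness}. For optimality, the maximum is attained at a corner, and that corner corresponds to an actual MHQ-positive qubit state by Step 1. Concretely, the state is obtained from Eq.~\eqref{eq:bloch_from_uv} with $(u,v)$ at the corresponding vertex. At this state $L_4=k(\phi)$. Any threshold $k'<k(\phi)$ would therefore flag this positive distribution as negative, so no smaller threshold is valid.

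The substantive difficulty is Lemma~\ref{lemma:vertex_domination}, which we are allowed to assume. If it had to be proved, I would first show that $F$ has no interior critical point and no critical point in the interior of any edge. I would try this by checking the sign of $3m_4\,\partial m_2-m_2\,\partial m_4$ using the explicit quadratic and quartic forms in $X,Y$. Within the present argument, the only care needed is the bijection in Step 1, in particular that all corners are realizable by valid states. That was already established before the statement.
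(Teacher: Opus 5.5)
Your proposal is correct and follows essentially the same route as the paper. Both arguments identify the MHQ-positive set exactly with the $(u,v)$ rectangle, reduce to maximizing $F=m_2^3/m_4$ over the $(X,Y)$ rectangle via Lemma~\ref{lemma:vertex_domination}, evaluate the four corners (the central one reproducing $K_0$ by scaling), and get optimality from attainment by a genuine MHQ-positive state. The only thing you leave implicit is the one-line check that $\max\{K_a,K_0,K_b\}$ equals $0=k(0)=k(\pi/2)$ at the commuting endpoints, where the lemma (which requires $a,b>0$) does not apply.
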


The threshold obeys the symmetry
\begin{equation}
k(\phi)
=
k\!\left(\frac{\pi}{2}-\phi\right),
\end{equation}
with
\begin{equation}
K_b(\phi)
=
K_a\!\left(\frac{\pi}{2}-\phi\right).
\end{equation}
Let \(\phi_c\in(0,\pi/4)\) denote the crossover angle defined
implicitly by
\begin{equation}
K_a(\phi_c)=K_0(\phi_c).
\label{eq:phi_c_definition}
\end{equation}
Numerically,
$\phi_c
\simeq
0.6956.$
The optimized threshold can then be written as
\begin{equation}
k(\phi)
=
\begin{cases}
K_a(\phi),
&
0\leq\phi\leq\phi_c,
\\[1mm]
K_0(\phi),
&
\phi_c\leq\phi
\leq
\dfrac{\pi}{2}-\phi_c,
\\[2mm]
K_b(\phi),
&
\dfrac{\pi}{2}-\phi_c
\leq\phi\leq\dfrac{\pi}{2}.
\end{cases}
\label{eq:kphi_piecewise}
\end{equation}

The maximizing positive MHQ distribution is not unique in
general. In the central angular region, the value \(K_0\) can
be attained both by boundary distributions of the form
\(\{a,b,0,0\}\) and by the distribution
\(\{a/2,b/2,b/2,a/2\}\). Outside this region, the maximizing
distributions are represented by
Eqs.~\eqref{eq:M_family_a} and \eqref{eq:M_family_b}.
Accordingly, the optimized threshold cannot in general be
obtained by restricting the optimization to pure states.

The criterion in Eq.~\eqref{eq:refined_qubit_witness}
improves upon the universal condition \(L_4>0\) whenever
\(k(\phi)<0\) (see Fig.~\ref{fig:qubit_threshold}). Its optimality, however, is restricted to
witnesses that use only the single scalar quantity \(L_4\);
it does not exclude stronger tests constructed from the
individual moments \(m_2\) and \(m_4\), from additional
moments, or from the full MHQ distribution.

\begin{figure}[t]
\centering
\includegraphics[width=0.9\linewidth]
{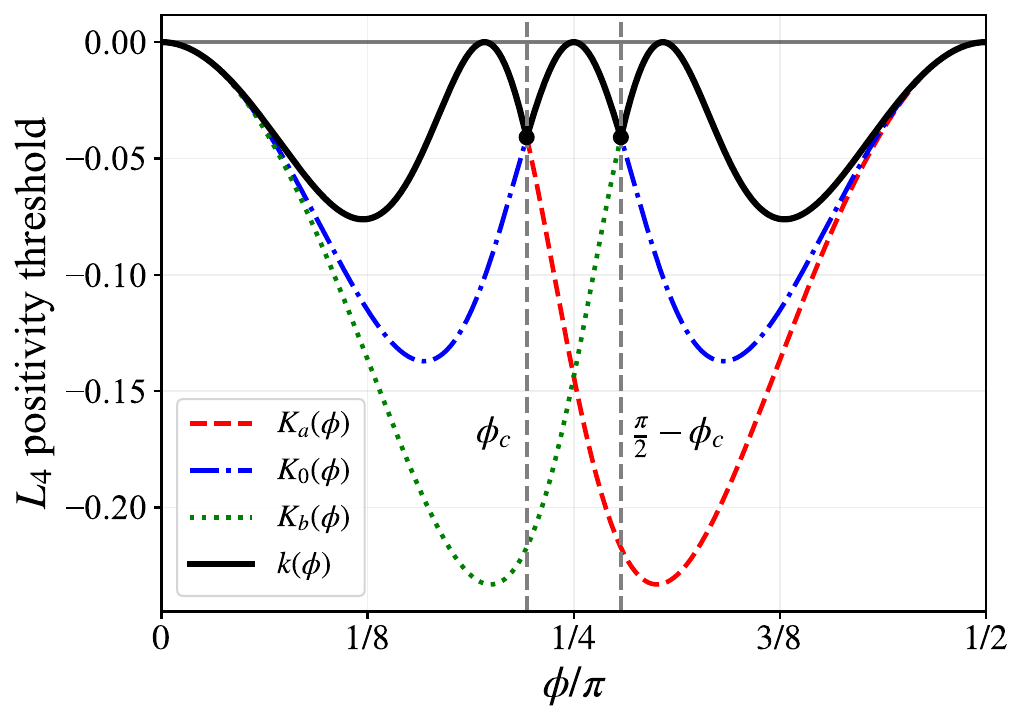}
\caption{
\justifying
\small
The solid black curve $k(\phi)$ denotes the optimal threshold as a function of the angle $\phi \in [0, \pi/2]$ between the Bloch vectors of the observables $A$ and $B$. The dashed red, dash-dotted blue, and dotted green curves represent the branches $K_a(\phi)$, $K_0(\phi)$, and $K_b(\phi)$, respectively. The black points and vertical dashed lines mark the transition points $\phi = \phi_c$ and $\phi = \pi/2 - \phi_c$, at which the optimal strategy switches from the $K_a$ to the $K_0$ branch, and from the $K_0$ to the $K_b$ branch, respectively.
}
\label{fig:qubit_threshold}
\end{figure}

\section{Extension to continuous-variable systems}
\label{s6}

The moment construction developed above is not restricted to
finite-dimensional quasiprobability distributions. It also
applies to continuous MH distributions, provided
the relevant absolute moments are finite. We illustrate this
extension for a single bosonic mode and the canonical
quadratures:
$\hat{x}
=
{(\hat{a}+\hat{a}^{\dagger})}/{\sqrt{2}},
\;
\hat{p}
=
{(\hat{a}-\hat{a}^{\dagger})}/{\sqrt{2} i},
\, \text{satisfying}\;
[\hat{x},\hat{p}]=i.$
Their generalized eigenstates satisfy
\begin{equation}
\langle x|p\rangle
=
\frac{1}{\sqrt{2\pi}}e^{ipx}.
\label{eq:xp_overlap}
\end{equation}

For a single-mode state $\rho$, the MHQ distribution
associated with the position and momentum measurements is
defined by
\begin{align}
M_{\rho}(x,p)
&=
\frac{1}{2}
\operatorname{Tr}
\left[
\left(
|x\rangle\langle x|
|p\rangle\langle p|
+
|p\rangle\langle p|
|x\rangle\langle x|
\right)
\rho
\right]
\nonumber\\
&=
\operatorname{Re}
\left[
\langle p|x\rangle
\langle x|\rho|p\rangle
\right]
\nonumber\\
&=
\frac{1}{\sqrt{2\pi}}
\operatorname{Re}
\left[
e^{-ipx}\langle x|\rho|p\rangle
\right].
\label{eq:CV_MHQ_definition}
\end{align}

The distribution is real and normalized,
\begin{equation}
\int_{\mathbb{R}^{2}}
M_{\rho}(x,p)\,dx\,dp
=
1,
\label{eq:CV_MHQ_normalization}
\end{equation}
and its marginals reproduce the position and momentum
probability densities:
\begin{align}
\int_{\mathbb{R}}
M_{\rho}(x,p)\,dp
=
\langle x|\rho|x\rangle, \quad
\int_{\mathbb{R}}
M_{\rho}(x,p)\,dx
=
\langle p|\rho|p\rangle.
\end{align}
As in the discrete case, noncommutativity is necessary but
not sufficient for negativity: the sign of
\(M_{\rho}(x,p)\) depends on both the state and the chosen
quadratures. Whenever the corresponding integrals exist, we define the regular and absolute moments as,
\begin{align}
m_n(\rho)
&=
\int_{\mathbb{R}^{2}}
[M_{\rho}(x,p)]^n\,dx\,dp,
\label{eq:CV_MHQ_moments}\\
\widetilde m_n(\rho)
&=
\int_{\mathbb{R}^{2}}
|M_{\rho}(x,p)|^n\,dx\,dp.
\label{eq:CV_absolute_moments}
\end{align}
The logarithmic MHQ negativity and its moment-based lower
bounds retain the same form as in the discrete setting:
\begin{equation}
L(\rho)
=
\ln\widetilde m_1(\rho),
\label{eq:CV_log_MHQ}
\end{equation}
and, for \(r>2\),
\begin{equation}
L_r(\rho)
=
\frac{1}{2-r}
\left[
\ln\widetilde m_r(\rho)
+
(1-r)\ln\widetilde m_2(\rho)
\right]
\leq
L(\rho).
\label{eq:CV_Lr}
\end{equation}
Since \(M_{\rho}(x,p)\) is real,
$\widetilde m_{2n}(\rho)=m_{2n}(\rho),$ 
and in particular
\begin{equation*}
L_4(\rho)
=
\frac{3}{2}\ln m_2(\rho)
-
\frac{1}{2}\ln m_4(\rho).
\end{equation*}

We next evaluate these quantities for two representative
families of non-Gaussian states.

\subsection{Fock states}
\label{subsec:CV_Fock}

For the Fock state $|m\rangle$, the position- and
momentum-space wavefunctions are
\begin{align}
\langle x|m\rangle
&=
\frac{
H_m(x)e^{-x^2/2}
}{
\pi^{1/4}\sqrt{2^m m!}
},\\
\langle p|m\rangle
&=
\frac{
(-i)^mH_m(p)e^{-p^2/2}
}{
\pi^{1/4}\sqrt{2^m m!}
}.
\end{align}
For the pure state $\rho_m=|m\rangle\langle m|$,
substitution into the definition of the MHQ distribution yields
\begin{equation}
M_m(x,p)
=
\frac{
H_m(x)H_m(p)
}{
\sqrt{2}\,\pi\,2^m m!
}
\exp\left[
-\frac{x^2+p^2}{2}
\right]
\cos\left(
px-\frac{m\pi}{2}
\right).
\label{eq:Fock_MHQ}
\end{equation}
The phase shift $m\pi/2$ originates from the Fourier phase
of the Hermite functions. Consequently, even and odd Fock
states exhibit cosine- and sine-type oscillatory structures,
respectively.

\subsection{Photon-added coherent states}
\label{subsec:CV_PACS}
An $m$-photon-added coherent state is defined as
\begin{equation}
|\alpha,m\rangle
=
\frac{
(\hat a^\dagger)^m|\alpha\rangle
}{
\sqrt{
m!L_m(-|\alpha|^2)
}
},
\end{equation}
where $L_m$ denotes the Laguerre polynomial. We define
\begin{equation}
x_0=\sqrt{2}\operatorname{Re}\alpha,
\qquad
p_0=\sqrt{2}\operatorname{Im}\alpha.
\end{equation}

Up to an overall global phase, the coherent-state wavefunctions
may be written as
\begin{align}
\langle x|\alpha\rangle
&=
\pi^{-1/4}
\exp\left[
-\frac{(x-x_0)^2}{2}
+
ip_0x
-
\frac{i}{2}x_0p_0
\right],\\
\langle p|\alpha\rangle
&=
\pi^{-1/4}
\exp\left[
-\frac{(p-p_0)^2}{2}
-
ix_0p
+
\frac{i}{2}x_0p_0
\right].
\end{align}

Using
\(
\hat a^\dagger
=
\frac{1}{\sqrt2}
\left(
x-\frac{d}{dx}
\right),
\)
one finds
\begin{equation}
(\hat a^\dagger)^m
\langle x|\alpha\rangle
=
2^{-m/2}
H_m
\left(
x-\frac{\alpha}{\sqrt2}
\right)
\langle x|\alpha\rangle.
\end{equation}
Hence,
\begin{equation}
\langle x|\alpha,m\rangle
=
\frac{
H_m\!\left(x-\alpha/\sqrt2\right)
}{
\sqrt{
2^m m!L_m(-|\alpha|^2)
}
}
\langle x|\alpha\rangle.
\end{equation}
Similarly,
\begin{equation}
\langle p|\alpha,m\rangle
=
\frac{
(-i)^m
H_m\!\left(p+i\alpha/\sqrt2\right)
}{
\sqrt{
2^m m!L_m(-|\alpha|^2)
}
}
\langle p|\alpha\rangle.
\end{equation}

Substitution into the definition of the MHQ distribution gives
\begin{align}
M_{\alpha,m}(x,p)
&=
\frac{
e^{-\frac12[(x-x_0)^2+(p-p_0)^2]}
}{
\sqrt2\,\pi\,
2^m m!L_m(-|\alpha|^2)
}
\nonumber\\
&\quad\times
\operatorname{Re}
\Bigg[
i^m
e^{i\Theta_\alpha(x,p)}
H_m
\left(
x-\frac{\alpha}{\sqrt2}
\right)
H_m
\left(
p-\frac{i\alpha^*}{\sqrt2}
\right)
\Bigg],
\label{eq:PACS_MHQ}
\end{align}
where $\Theta_\alpha(x,p) = p_0x+x_0p-x_0p_0-px.$ Unlike the expression for Fock states, the polynomial factors
are generally complex when \(\alpha\notin\mathbb{R}\).
Equation~\eqref{eq:PACS_MHQ} reduces to
Eq.~\eqref{eq:Fock_MHQ} when \(\alpha=0\), as required.
The Gaussian envelope in Eq.~\eqref{eq:PACS_MHQ} is
displaced by \((x_0,p_0)\), while photon addition introduces
higher-order polynomial and interference structures. These
features can modify both the magnitude and spatial
distribution of MHQ negativity. We evaluate the corresponding
absolute moments numerically rather than assuming monotonic
behavior with either \(m\) or \(|\alpha|\).

The numerical results for the parameter ranges considered are
shown in Fig.~\ref{fig:cv_mh}. For both state families, the
fourth-moment quantity satisfies
\[
L_4\geq L_6,
\]
in agreement with the general hierarchy derived above. The
dependence of $L$, $L_4$, and $L_6$ on the excitation number
is obtained numerically and is not intended to establish a
general monotonicity property. Rather, these examples illustrate
that low-order moments can capture a substantial part of the
MHQ negativity without reconstructing the full
quasiprobability distribution.

\begin{figure}[t]
\centering
\includegraphics[width=1\linewidth]{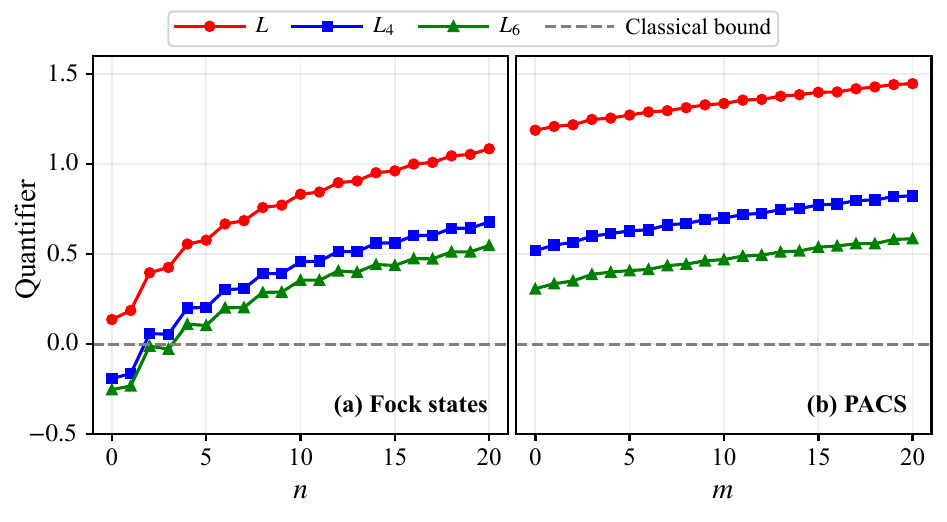}
\caption{
\justifying
\small
Comparison of the logarithmic MHQ negativity $L$ and the
moment-based lower bounds $L_4$ and $L_6$ for
(a) Fock states and (b) photon-added coherent states (for $\alpha = 2$) for the
parameter ranges considered. The inequalities
$
L\geq L_4\geq L_6
$
hold throughout the displayed ranges, illustrating that the
fourth-moment bound provides a tighter estimate than the
sixth-moment bound. The observed dependence on the excitation
number is numerical and is not asserted as a general
monotonicity property.
}
\label{fig:cv_mh}
\end{figure}

\section{Multicopy representation and estimation of MHQ moments}
\label{s7}

The lower bounds introduced in Sec.~\ref{s4} depend on the
even moments of the MHQ distribution.
A direct evaluation of these moments from the individual
quasiprobability elements would require reconstructing the
MHQ distribution. We now show that this reconstruction is not
necessary: every ordinary MHQ moment admits an exact
multicopy representation. We then discuss two possible routes
for estimating the corresponding expectation values.

\subsection{Multicopy representation}
\label{MHQ_operator}

For the projective observables
$
A=\sum_i a_i\Pi_i^a,
\;
B=\sum_j b_j\Pi_j^b,
$, we define the Hermitian operators
\begin{equation}
Y_{ij}
=
\frac{1}{2}
\left(
\Pi_i^a\Pi_j^b
+
\Pi_j^b\Pi_i^a
\right).
\label{eq:Yij_definition}
\end{equation}
The MHQ elements can then be written as
\begin{equation}
M_{ij}(\rho;A,B)
=
\operatorname{Tr}(Y_{ij}\rho).
\label{eq:MHQ_Yij}
\end{equation}
For every positive integer \(n\), the identity
\begin{equation}
\left[
\operatorname{Tr}(X\rho)
\right]^n
=
\operatorname{Tr}
\left[
X^{\otimes n}\rho^{\otimes n}
\right]
\label{eq:multicopy_identity}
\end{equation}
gives
\begin{align}
m_n(\rho;A,B)
&=
\sum_{i,j}
\left[
M_{ij}(\rho;A,B)
\right]^n
\nonumber\\
&=
\operatorname{Tr}
\left[
\mathcal M_n^{A,B}\rho^{\otimes n}
\right],
\label{eq:MHQ_multicopy_representation}
\end{align}
where
\begin{equation}
\mathcal M_n^{A,B}
=
\sum_{i,j}
Y_{ij}^{\otimes n}.
\label{eq:Mn_definition}
\end{equation}
Since every \(Y_{ij}\) is Hermitian,
\begin{equation}
\left(\mathcal M_n^{A,B}\right)^\dagger
=
\mathcal M_n^{A,B}.
\end{equation}
Thus \(m_n\) is the expectation value of a Hermitian
observable acting on \(n\) identical copies of the state. The
operator \(\mathcal M_n^{A,B}\) is also invariant under
permutations of the copies, reflecting the symmetry of the
polynomial \(m_n\).

In particular, we consider the second and fourth MHQ moments, which we require to construct the quantifier $L_{4;A,B}$,
\begin{align}
m_2
&=
\operatorname{Tr}
\left[
\mathcal M_2^{A,B}\rho^{\otimes2}
\right],
&
\mathcal M_2^{A,B}
&=
\sum_{i,j}Y_{ij}\otimes Y_{ij},
\label{eq:m2_multicopy}
\\
m_4
&=
\operatorname{Tr}
\left[
\mathcal M_4^{A,B}\rho^{\otimes4}
\right],
&
\mathcal M_4^{A,B}
&=
\sum_{i,j}Y_{ij}^{\otimes4}.
\label{eq:m4_multicopy}
\end{align}
No expansion of these operators into all projector orderings is
required for the subsequent analysis. The compact forms in
Eqs.~\eqref{eq:m2_multicopy} and
\eqref{eq:m4_multicopy} already specify the collective
observables whose expectation values yield the desired
moments. These multicopy identities establish experimental
accessibility in principle, which we discuss in the next few subsections.

\subsection{Phase-sensitive interferometric estimation}
\label{MHQ_interferometer}
A physically intuitive route for accessing the MHQ moments is through interferometry. In particular, ancilla-assisted Mach–Zehnder interferometry provides access to the complex characteristic function $\chi_n (\lambda)$, whose phase-sensitive response near $\lambda=0$ yields the desired moment, as detailed below. This allows one to relate the MHQ moments directly to experimentally measurable quantities.

A Mach--Zehnder interferometer consists of two spatial paths, which define a two-dimensional Hilbert space spanned by $\{\ket{0},\ket{1}\}$. This degree of freedom is referred to as the \emph{path degree of freedom}. The quantum state $\rho$, whose properties are to be probed, is encoded in an auxiliary system, referred to as the \emph{internal degree of freedom}. The total system is therefore described by the tensor product space of path and internal degrees of freedom.

The interferometer typically contains two $50$-$50$ beam splitters (BS-I, BS-II), a phase shifter ($ PS (\theta) $), mirrors (M-I, M-II), and a controlled unitary $(U_n (\lambda))$ acting on the internal system (see Fig.~\ref{fig:MHQ_interferometer}). The corresponding unitary operations are as follows:
\begin{itemize}
\item The beam splitter (BS) is represented by the Hadamard operation $H$ acting on the path degree of freedom.
\item A phase shift (PS) of $\theta$ in the path $\ket{0}$ is implemented by the unitary $e^{i\theta}\ket{0}\bra{0} + \ket{1}\bra{1}$.
\item A controlled unitary $U$ acting on the internal system is given by $\ket{0}\bra{0}\otimes \mathbb{I} + \ket{1}\bra{1}\otimes U$.
\item Mirrors are represented by the Pauli-$X$ operation on the path space.
\end{itemize}
A standard ancilla-assisted interferometer can estimate the
complex quantity
\begin{equation}
\chi_n(\lambda)
=
\operatorname{Tr}
\left[
U_n(\lambda)\rho^{\otimes n}
\right],
\quad \text{where}\;\;
U_n(\lambda)
=
e^{i\lambda\mathcal M_n^{A,B}},
\label{eq:characteristic_function}
\end{equation}
provided the controlled unitary \(U_n(\lambda)\) can be
implemented.

To see how the interferometric signal is related to
\(\chi_n(\lambda)\), let the path degree of freedom act as an
ancilla controlling \(U_n(\lambda)\). Introducing a relative
phase \(\theta\) between the two paths gives the output
probability
\begin{equation}
P_0(\theta,\lambda)
=
\frac{1}{2}
\left[
1+
\operatorname{Re}
\left(
e^{-i\theta}\chi_n(\lambda)
\right)
\right].
\label{eq:interferometric_probability}
\end{equation}
Consequently,
\begin{align}
\operatorname{Re}\chi_n(\lambda)
&=
2P_0(0,\lambda)-1,
\label{eq:real_chi}
\\
\operatorname{Im}\chi_n(\lambda)
&=
2P_0\!\left(\frac{\pi}{2},\lambda\right)-1.
\label{eq:imag_chi}
\end{align}
The modulus
\(
|\chi_n(\lambda)|
\)
determines the fringe visibility, whereas the shift of the
fringe determines its phase. Both quantities, or equivalently
the two probabilities in
Eqs.~\eqref{eq:real_chi} and \eqref{eq:imag_chi}, are required
to reconstruct the full complex signal. Expanding Eq.~\eqref{eq:characteristic_function} around
\(\lambda \rightarrow 0\) gives
\begin{align}
\chi_n(\lambda)
&=
1
+
i\lambda
\operatorname{Tr}
\left[
\mathcal M_n^{A,B}\rho^{\otimes n}
\right]
-\frac{\lambda^2}{2}
\operatorname{Tr}
\left[
\left(\mathcal M_n^{A,B}\right)^2
\rho^{\otimes n}
\right]
+
O(\lambda^3)
\nonumber\\
&=
1+i\lambda m_n+O(\lambda^2).
\label{eq:chi_expansion}
\end{align}
The moment is therefore encoded in the derivative of the
imaginary part:
\begin{equation}
m_n
=
\left.
\frac{d}{d\lambda}
\operatorname{Im}\chi_n(\lambda)
\right|_{\lambda \rightarrow 0}.
\label{eq:moment_derivative}
\end{equation}
For a small but finite value of \(\lambda\),
\begin{equation}
\frac{\operatorname{Im}\chi_n(\lambda)}{\lambda}
=
m_n
-
\frac{\lambda^2}{6}
\operatorname{Tr}
\left[
\left(\mathcal M_n^{A,B}\right)^3
\rho^{\otimes n}
\right]
+
O(\lambda^4).
\label{eq:finite_lambda_estimator}
\end{equation}
Alternatively, a symmetric finite difference may also be used:
\begin{equation}
m_n
=
\lim_{\lambda\rightarrow0}
\frac{
\operatorname{Im}\chi_n(\lambda)
-
\operatorname{Im}\chi_n(-\lambda)
}{
2\lambda
}.
\label{eq:symmetric_derivative}
\end{equation}

For the two moments required by \(L_{4;A,B}\), one would use
$U_2(\lambda)
=
e^{i\lambda\mathcal M_2^{A,B}}$
on two copies and
$U_4(\lambda)
=
e^{i\lambda\mathcal M_4^{A,B}}$
on four copies. This provides a formally valid
interferometric procedure, but its practical implementation
requires controlled evolutions generated by the generally
nonlocal multicopy observables
\(\mathcal M_2^{A,B}\) and \(\mathcal M_4^{A,B}\). The
construction should therefore be understood as an
ancilla-assisted measurement prescription rather than an
immediate laboratory implementation for arbitrary systems.

\begin{figure}[t]
\centering
\includegraphics[width=0.9\linewidth]
{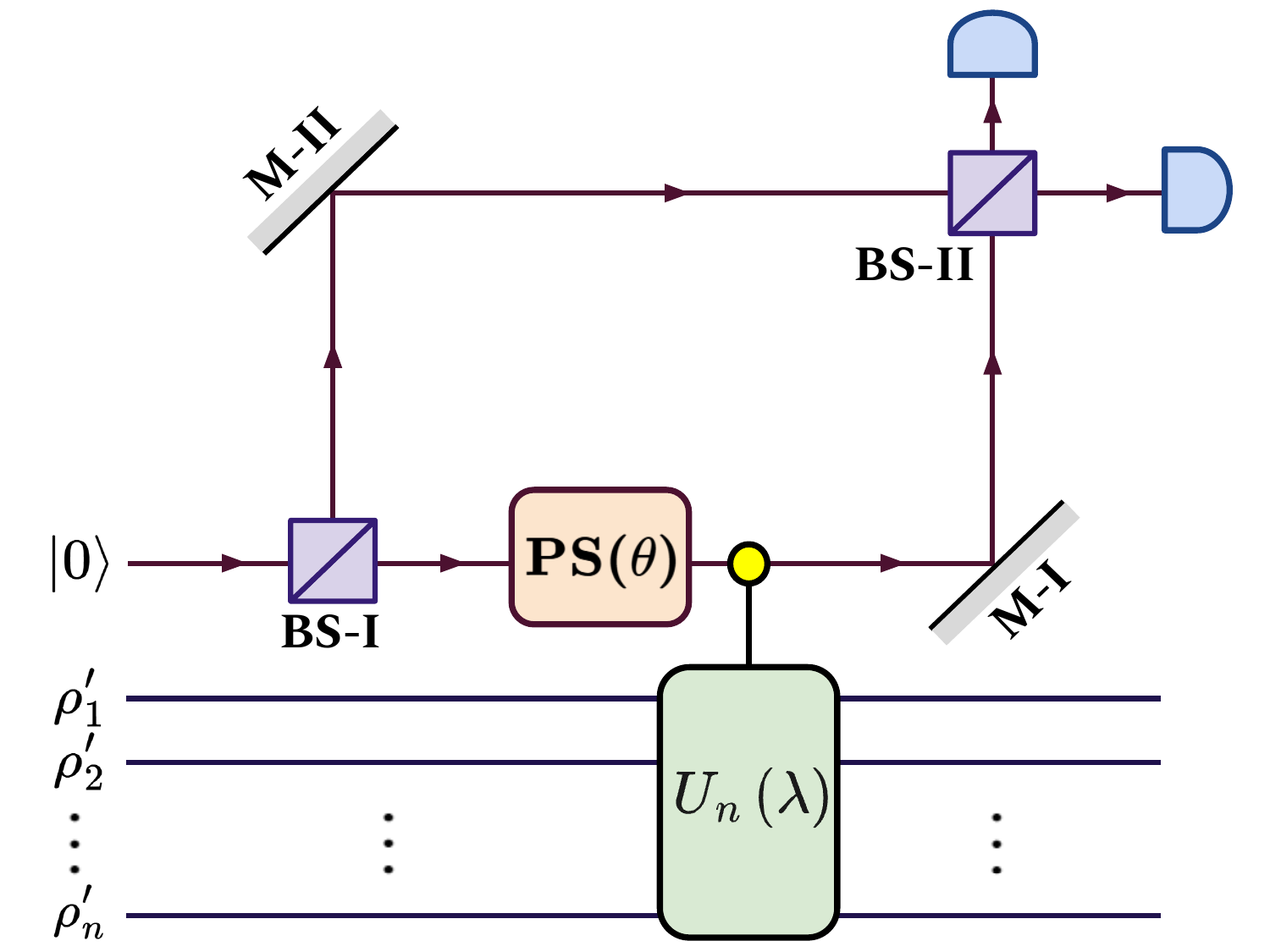}
\caption{
\justifying
\small
Ancilla-assisted interferometric estimation of the
characteristic function
\(
\chi_n(\lambda)
=
\operatorname{Tr}
[e^{i\lambda\mathcal M_n^{A,B}}\rho^{\otimes n}]
\).
The path degree of freedom controls the unitary
\(e^{i\lambda\mathcal M_n^{A,B}}\) acting on \(n\) copies of
the state. Measurements at two relative phases,
\(\theta=0\) and \(\theta=\pi/2\), yield the real and
imaginary parts of \(\chi_n(\lambda)\), respectively. The
MHQ moment \(m_n\) is obtained from the derivative of the
imaginary part at \(\lambda \rightarrow 0\), rather than from the
visibility alone.
}
\label{fig:MHQ_interferometer}
\end{figure}

\subsection{Estimation from classical shadows}
\label{subsec:MHQ_shadows}

Collective implementation of
\(\mathcal M_n^{A,B}\) may become difficult as the moment
order increases. An alternative is to estimate the multicopy
expectation values from randomized single-copy measurements.
Classical-shadow protocols associate each measurement outcome
with a random matrix \(\widehat\rho\) satisfying
\begin{equation}
\mathbb E[\widehat\rho]=\rho.
\label{eq:shadow_unbiasedness}
\end{equation}
Independent snapshots can therefore be combined to estimate
polynomial functions of the state. Let
\begin{equation}
\mathcal S_N
=
\{
\widehat\rho_1,\ldots,\widehat\rho_N
\}
\end{equation}
be \(N\) independent shadow snapshots. For distinct indices
\(i_1,\ldots,i_n\),
\begin{align}
&\mathbb E
\left[
\operatorname{Tr}
\left(
\mathcal M_n^{A,B}
\widehat\rho_{i_1}
\otimes\cdots\otimes
\widehat\rho_{i_n}
\right)
\right]
\nonumber\\
&
=
\operatorname{Tr}
\left[
\mathcal M_n^{A,B}\rho^{\otimes n}
\right] \nonumber\\
&= m_n.
\label{eq:shadow_kernel_unbiased}
\end{align}
This gives the unbiased U-statistic estimator~\cite{huang2020predicting}
\begin{equation}
\widehat m_n
=
\binom{N}{n}^{-1}
\sum_{1\leq i_1<\cdots<i_n\leq N}
\operatorname{Tr}
\left[
\mathcal M_n^{A,B}
\bigotimes_{\ell=1}^{n}
\widehat\rho_{i_\ell}
\right].
\label{eq:Mn_U_statistic}
\end{equation}
Because \(\mathcal M_n^{A,B}\) is permutation invariant, no
additional symmetrization over the selected snapshots is
required. When evaluation of all
\(\binom{N}{n}\) tuples is computationally costly, an
incomplete U-statistic obtained by randomly sampling distinct
tuples remains unbiased with respect to the tuple sampling.

For the second and fourth moments,
\begin{align}
\widehat m_2
&=
\binom{N}{2}^{-1}
\sum_{i<j}
\operatorname{Tr}
\left[
\mathcal M_2^{A,B}
\widehat\rho_i\otimes\widehat\rho_j
\right],
\label{eq:m2_shadow_estimator}
\\
\widehat m_4
&=
\binom{N}{4}^{-1}
\sum_{i<j<k<\ell}
\operatorname{Tr}
\left[
\mathcal M_4^{A,B}
\widehat\rho_i\otimes\widehat\rho_j
\otimes\widehat\rho_k\otimes\widehat\rho_\ell
\right].
\label{eq:m4_shadow_estimator}
\end{align}
These estimators access the required moments without reconstructing the individual MHQ elements.

The corresponding plug-in estimator is
\begin{equation}
\widehat L_4
=
\frac{3}{2}\ln\widehat m_2
-
\frac{1}{2}\ln\widehat m_4.
\label{eq:L4_plugin_estimator}
\end{equation}
Although
\(\widehat m_2\) and \(\widehat m_4\) are unbiased,
\(\widehat L_4\) is generally biased because of the nonlinear
logarithms.
For a statistically valid certification statement, one may
construct simultaneous confidence bounds
\begin{equation}
m_2\geq m_2^{\mathrm{LB}},
\qquad
m_4\leq m_4^{\mathrm{UB}},
\end{equation}
with a prescribed confidence level. Since \(L_4\) increases
with \(m_2\) and decreases with \(m_4\), the quantity
\begin{equation}
L_4^{\mathrm{LB}}
=
\frac{3}{2}
\ln m_2^{\mathrm{LB}}
-
\frac{1}{2}
\ln m_4^{\mathrm{UB}}
\label{eq:L4_confidence_bound}
\end{equation}
is then a conservative lower confidence bound on the exact
\(L_4\).

\subsection{Single-qubit numerical illustration}
\label{subsec:shadow_numerics}

\begin{figure*}[t]
\centering

\begin{subfigure}{0.32\textwidth}
    \centering
    \includegraphics[width=\textwidth]{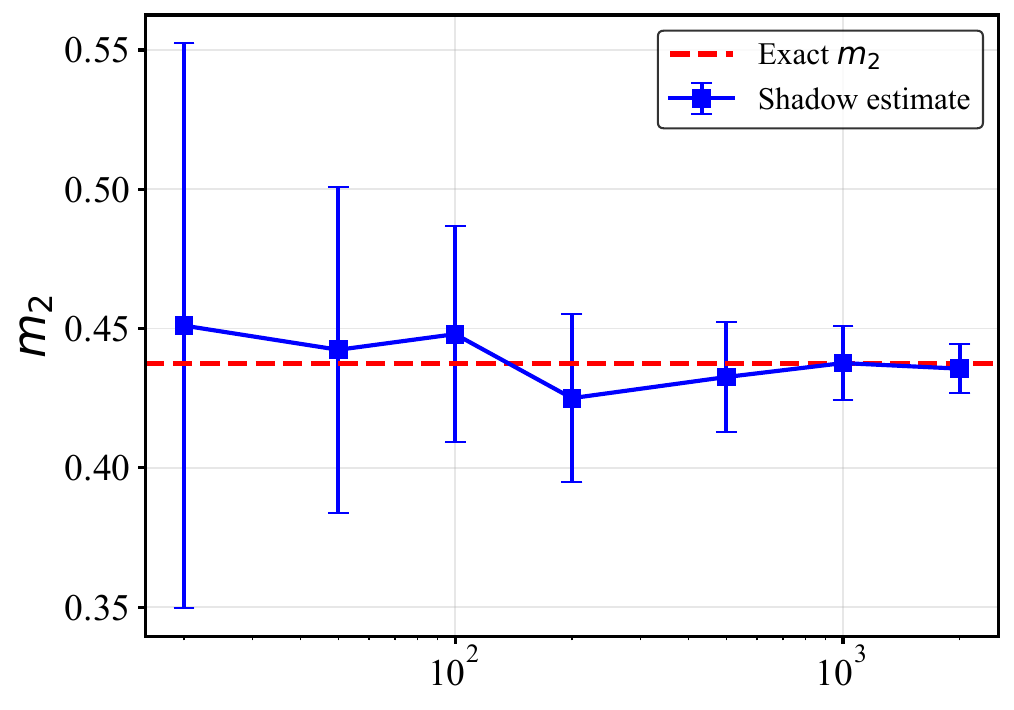}
    \caption{}
    \label{fig:m2_shadow}
\end{subfigure}
\hfill
\begin{subfigure}{0.32\textwidth}
    \centering
    \includegraphics[width=\textwidth]{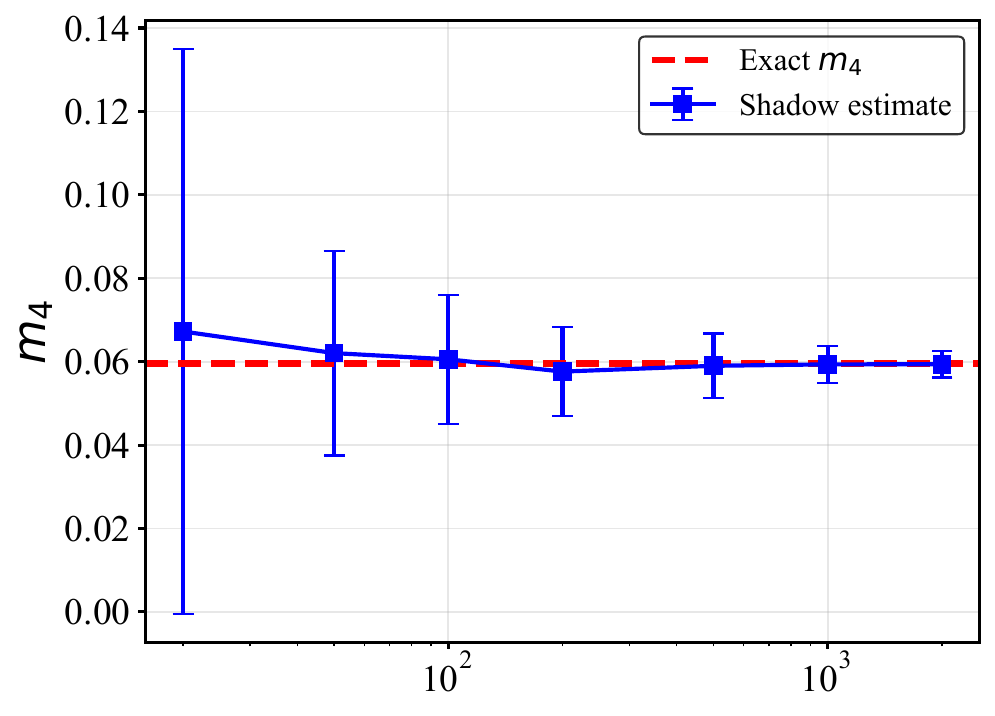}
    \caption{}
    \label{fig:m4_shadow}
\end{subfigure}
\hfill
\begin{subfigure}{0.32\textwidth}
    \centering
    \includegraphics[width=\textwidth]{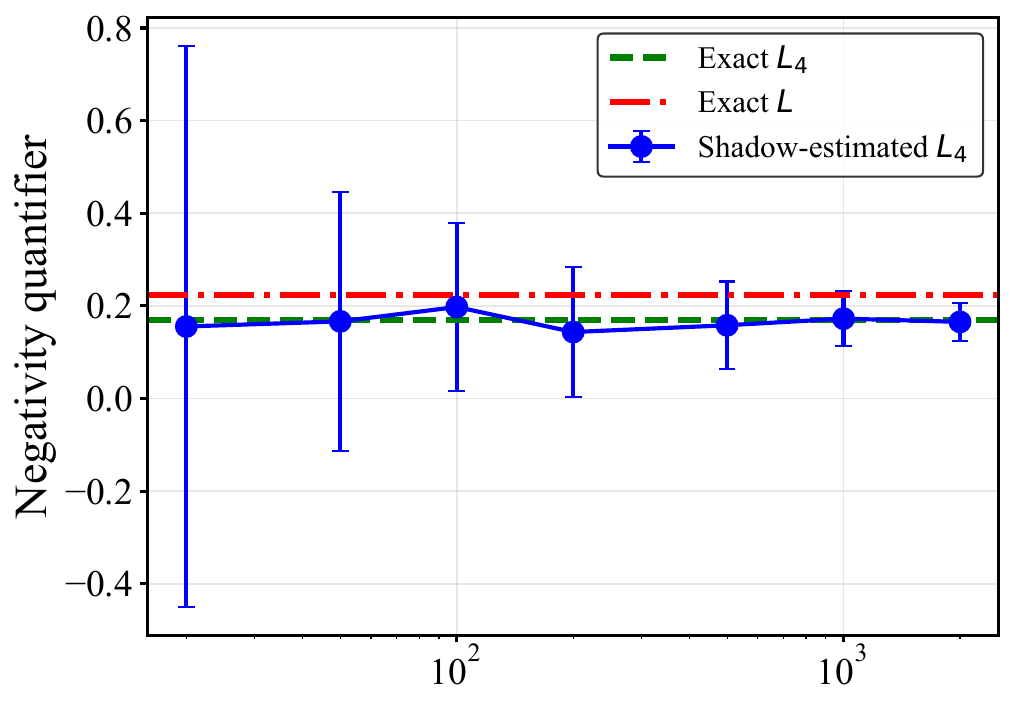}
    \caption{}
    \label{fig:L4_shadow}
\end{subfigure}

\caption{
\justifying
\small
Nonlinear classical-shadow estimation for the qubit state and
measurement pair in Eq.~\eqref{eq:shadow_qubit_observables}.
Panels (a) and (b) show the U-statistic estimates of the
second and fourth MHQ moments, respectively. Panel (c) shows
the plug-in estimate
\(
\widehat L_4
=
\frac32\ln\widehat m_2-\frac12\ln\widehat m_4
\).
Horizontal reference lines denote the exact values.
Points denote the mean over $100$ independent realizations and the caps denote the corresponding standard deviation.
The finite-sample estimator
\(\widehat L_4\) is not constrained to remain below the exact
logarithmic MHQ negativity in every realization; the
inequality \(L_4\leq L\) concerns the underlying exact
quantities.
}
\label{fig:shadow_mhq}
\end{figure*}

We illustrate the estimator for the qubit observables
\begin{equation}
A=\sigma_z,
\quad
B=
\cos(2\alpha)\sigma_z
+
\sin(2\alpha)\sigma_x,
\;\; \text{with}\;\;
\alpha=\frac{\pi}{6}.
\label{eq:shadow_qubit_observables}
\end{equation}
We use the state that globally maximizes $L_{A,B}$ for this
observable pair (Proposition~\ref{prop:universal_qubit_bound}),
with exact Bloch vector $\boldsymbol r=\hat{\boldsymbol
a}-\hat{\boldsymbol b}=(-\sqrt3/2,0,1/2)$:
\begin{equation}
\rho
=
\begin{pmatrix}
3/4 & -\sqrt3/4\\
-\sqrt3/4 & 1/4
\end{pmatrix}
=
\begin{pmatrix}
0.75 & -0.4330\\
-0.4330 & 0.25
\end{pmatrix}.
\label{eq:shadow_qubit_state}
\end{equation}
The exact values are
\begin{align}
m_2&=7/16=0.4375, \quad m_4=61/1024\simeq0.05957 \nonumber \\
L&=\ln(5/4)\simeq0.223144,
\quad L_4 \simeq 0.170281. \label{eq:qubit_exact_values}
\end{align}
These numbers should be interpreted as properties of this
specific state and measurement pair, not as general bounds
for qubit systems.

Randomized Pauli measurements are used to generate
single-qubit shadow snapshots. The moments are estimated using
Eqs.~\eqref{eq:m2_shadow_estimator} and
\eqref{eq:m4_shadow_estimator}, or incomplete versions of
these U-statistics when only a random subset of distinct tuples
is evaluated. Repeating the full simulation provides the
sampling distribution of each estimator~\cite{huang2020predicting}.

Fig.~\ref{fig:shadow_mhq} compares the resulting estimates
with their exact values. Panels (a) and (b) show the estimates
of \(m_2\) and \(m_4\), while panel (c) shows the plug-in
quantity \(\widehat L_4\). The fluctuations of
\(\widehat L_4\) are larger than those of the individual
moments because errors in \(m_4\) are amplified by the
logarithm. All plotted uncertainty bands should therefore be
obtained from repeated independent simulations or an
appropriate bootstrap procedure and explicitly identified in
the caption.

This numerical example illustrates convergence for a single-qubit instance and demonstrates the applicability of the classical-shadow approach to the present setting. Our aim here is not to establish a general copy-complexity advantage or to derive copy-complexity scaling for higher-dimensional systems. Such an analysis lies beyond the scope of this present work.

\section{Conclusions}
\label{s8}

The central question addressed in this work is how much MHQ
nonclassicality a given state and pair of measurements exhibit,
without demanding a full reconstruction of the underlying
quasiprobability distribution. The logarithmic MHQ negativity
introduced here answers this question directly: it is the
logarithm of the total absolute weight of the distribution, it
vanishes exactly when the distribution is nonnegative, and it is
bounded above by the dimension of the underlying Hilbert space.
Although this quantity is defined through the complete set of
MHQ elements, we have shown that it can nonetheless be certified,
and to a large extent quantified, using only a finite number of
its moments.

The resulting moment-based framework yields a hierarchy of
certifiable lower bounds on the logarithmic MHQ negativity, with
the fourth-moment member providing the strongest bound within
this hierarchy, and the underlying moments admit exact operator
representations in terms of few-copy observables that can in
principle be estimated through collective measurements or
classical-shadow protocols. For qubit systems we obtained two
complementary results. First, a universal bound of $\ln(5/4)$ on
the exact logarithmic negativity, valid for every qubit state and
every pair of dichotomic observables and saturated by an
explicitly identified pure state, sharpens the generic
dimension-dependent bound. Second, solving the underlying
optimization over arbitrary, generally mixed, qubit states in
closed form yields an exact measurement-dependent threshold on
the fourth-moment quantifier, providing a strictly tighter
certification criterion than the generic condition based on
positivity of the fourth moment alone. 
We further extended the construction to CV systems and illustrated it for representative non-Gaussian states, including Fock states and photon-added coherent states. These examples show that low-order moments can capture a substantial part of the corresponding MHQ negativity, though a fuller understanding of the CV MHQ distribution and its role in information-theoretic tasks remains an important direction for future work.

The analysis also clarifies the relationship between the MHQ and
KD distributions (see Appendix~\ref{s_kd} for a detailed extension of the moment-based construction to the KD distribution). Because the KD
distribution is generally complex, its ordinary power moments do
not in general coincide with the absolute moments needed to
bound its total negative weight, a simplification that is
special to the real-valued MHQ case. This does not mean that a
moment-based approach is unavailable for the KD
distribution: the required absolute moments still admit exact
multicopy representations, though these now involve mixed
products of the KD elements and their complex
conjugates rather than ordinary power sums, making the
construction considerably more involved than in the MHQ case.
Developing experimentally practical estimators for these mixed
moments, and identifying classes of states or measurements for
which simpler moment-based methods remain effective, constitute
promising directions for future research.

\vspace{0.5cm}
\section{Acknowledgements}
I acknowledge Saheli Mukherjee and Bivas Mallick for useful discussions and their insightful comments.

\appendix
\renewcommand{\theequation}{S\arabic{equation}}
\setcounter{equation}{0}

\section{Proof of Theorem~\ref{thm:MHQ_Hankel}} \label{app:1}

Let $\mathbf{y}=(y_0,y_1,\dots,y_p)\in \mathbb{R}^{p+1}$ be an arbitrary real vector. Consider the quadratic form corresponding to the Hankel matrix $H_p(\mathbf{m})$,
\begin{equation}
\begin{split}
    \mathbf{y}^T H_p(\mathbf{m}) \mathbf{y}
    &= \sum_{k=0}^{p}\sum_{l=0}^{p}
    y_k y_l \, m_{k+l+1}.
\end{split}
\end{equation}

Using the definition of the MHQ moments from Eq.~\eqref{eq:MHQ_moments}, we have
\begin{equation}
\begin{split}
    m_{k+l+1}
    =
    \sum_{i,j=1}^d
    \left[M(\rho)\right]_{ij}^{\,k+l+1}.
\end{split}
\end{equation}

Substituting this expression into the quadratic form gives
\begin{equation}
\begin{split}
    \mathbf{y}^T H_p(\mathbf{m}) \mathbf{y}
    &=
    \sum_{k=0}^{p}\sum_{l=0}^{p}
    y_k y_l
    \sum_{i,j=1}^d
    \left[M(\rho)\right]_{ij}^{\,k+l+1}.
\end{split}
\end{equation}

Since all sums are finite, we may interchange their order:
\begin{equation}
\begin{split}
    \mathbf{y}^T H_p(\mathbf{m}) \mathbf{y}
    &=
    \sum_{i,j=1}^d
    \sum_{k=0}^{p}\sum_{l=0}^{p}
    y_k y_l
    \left[M(\rho)\right]_{ij}^{\,k+l+1}.
\end{split}
\end{equation}

Extracting one factor of $[M(\rho)]_{ij}$, we obtain
\begin{equation}
\begin{split}
    \mathbf{y}^T H_p(\mathbf{m}) \mathbf{y}
    &=
    \sum_{i,j}
    [M(\rho)]_{ij}\\
    &\times\left[
    \sum_{k=0}^{p}\sum_{l=0}^{p}
    y_k y_l
    [M(\rho)]_{ij}^{\,k}
    [M(\rho)]_{ij}^{\,l}\right].
\end{split}
\end{equation}

The double sum can now be recognized as a complete square:
\begin{equation}
\begin{split}
    &\sum_{k=0}^{p}\sum_{l=0}^{p}
    y_k y_l
    [M(\rho)]_{ij}^{\,k}
    [M(\rho)]_{ij}^{\,l}\\
    &=
    \left(
    \sum_{k=0}^{p}
    y_k [M(\rho)]_{ij}^{\,k}
    \right)^2 .
\end{split}
\end{equation}

Therefore,
\begin{equation}
\begin{split}
    \mathbf{y}^T H_p(\mathbf{m}) \mathbf{y}
    &=
    \sum_{i,j}
    [M(\rho)]_{ij}
    \left(
    \sum_{k=0}^{p}
    y_k [M(\rho)]_{ij}^{\,k}
    \right)^2 .
\end{split}
\end{equation}

If the MHQ distribution is positive, i.e.,
\begin{equation}
    [M(\rho)]_{ij} \ge 0
    \qquad \forall\, i,j,
\end{equation}
then each term in the above summation is nonnegative, since the squared quantity is always nonnegative. Consequently,
\begin{equation}
    \mathbf{y}^T H_p(\mathbf{m}) \mathbf{y}
    \ge 0
    \qquad \forall\, \mathbf{y}\in \mathbb{R}^{p+1}.
\end{equation}

Hence, the Hankel matrix $H_p(\mathbf{m})$ is positive semidefinite,
\begin{equation}
    H_p(\mathbf{m}) \succeq 0,
\end{equation}
which completes the proof.

\section{Properties of logarithmic MHQ negativity}
\label{app:log_MHQ_properties}

In this appendix, we prove
Proposition~\ref{prop:log_MHQ_properties}. For compactness,
we write
\(
M_{ij}=M_{ij}(\rho;A,B)
\)
whenever the state and measurements are clear from context.

\subsection{Faithfulness and non-negativity}

Normalization and the triangle inequality give
\begin{equation}
\sum_{i,j}|M_{ij}|
\geq
\left|
\sum_{i,j}M_{ij}
\right|
=
1.
\label{eq:appendix_triangle}
\end{equation}
Therefore,
\begin{equation}
L_{A,B}(\rho)
=
\ln\sum_{i,j}|M_{ij}|
\geq0.
\end{equation}
If every \(M_{ij}\) is nonnegative, normalization
implies
\begin{equation}
\sum_{i,j}|M_{ij}|
=
\sum_{i,j}M_{ij}
=
1 \quad \Longrightarrow \; L_{A,B}(\rho)=0.
\end{equation}

Conversely, suppose that
\(
L_{A,B}(\rho)=0
\).
Then
$\sum_{i,j}|M_{ij}|=1.$
Using
$\sum_{i,j}|M_{ij}|
=
1+
2\sum_{M_{ij}<0}|M_{ij}|,$
we obtain
\begin{equation}
\sum_{M_{ij}<0}|M_{ij}|=0.
\end{equation}
Thus no MHQ element is negative, proving
Eq.~\eqref{eq:log_MHQ_faithfulness}.

\subsection{Unitary covariance}

Let $\rho'=U\rho U^\dagger.$
The spectral projectors of the transformed observables
\(
U^\dagger A U
\)
and
\(
U^\dagger B U
\)
are respectively
\(
U^\dagger\Pi_i^aU \)
 and
\(U^\dagger\Pi_j^bU.
\)
Using cyclicity of the trace,
\begin{align}
&\quad M_{ij}(U\rho U^\dagger;A,B) \nonumber\\
&=
\frac{1}{2}
\operatorname{Tr}
\left[
\left(
\Pi_i^a\Pi_j^b+
\Pi_j^b\Pi_i^a
\right)
U\rho U^\dagger
\right]
\nonumber\\
&=
\frac{1}{2}
\operatorname{Tr}
\left[
\left(
U^\dagger\Pi_i^aU\,
U^\dagger\Pi_j^bU
+
U^\dagger\Pi_j^bU\,
U^\dagger\Pi_i^aU
\right)
\rho
\right]
\nonumber\\
&=
M_{ij}
\left(
\rho;
U^\dagger A U,
U^\dagger B U
\right).
\end{align}
Taking the sum of absolute values proves
Eq.~\eqref{eq:log_MHQ_covariance}.

\subsection{Invariance under outcome relabeling}

Relabeling the outcomes of either measurement only permutes
the entries of the MHQ distribution. Since
\(
L_{A,B}(\rho)
\)
depends on the distribution through the permutation-invariant
quantity
\(
\sum_{i,j}|M_{ij}|,
\)
it remains unchanged under independent relabelings of the
indices \(i\) and \(j\).

\subsection{Dimension-dependent upper bound}

Define the measurement probabilities
\begin{equation}
p_i^A
=
\operatorname{Tr}(\Pi_i^a\rho),
\qquad
p_j^B
=
\operatorname{Tr}(\Pi_j^b\rho).
\end{equation}
Since
\begin{equation}
M_{ij}
=
\operatorname{Re}
\operatorname{Tr}
\left(
\Pi_i^a\Pi_j^b\rho
\right),
\end{equation}
we have
\begin{equation}
|M_{ij}| =
\left|\operatorname{Re} 
\operatorname{Tr}
\left(
\Pi_i^a\Pi_j^b\rho
\right)
\right|
\leq
\left|
\operatorname{Tr}
\left(
\Pi_i^a\Pi_j^b\rho
\right)
\right|.
\label{eq:MHQ_complex_bound}
\end{equation}
Writing
\[
\operatorname{Tr}
\left(
\Pi_i^a\Pi_j^b\rho
\right)
=
\operatorname{Tr}
\left[
\left(
\Pi_i^a\sqrt{\rho}
\right)^\dagger
\left(
\Pi_j^b\sqrt{\rho}
\right)
\right],
\]
the Hilbert--Schmidt Cauchy--Schwarz inequality gives
\begin{equation}
|M_{ij}|
\leq
\sqrt{
p_i^A p_j^B
}.
\label{eq:MHQ_probability_bound}
\end{equation}
Consequently,
\begin{align}
\sum_{i,j}|M_{ij}|
&\leq
\sum_{i,j}
\sqrt{p_i^A p_j^B}
\nonumber\\
&=
\left(
\sum_i\sqrt{p_i^A}
\right)
\left(
\sum_j\sqrt{p_j^B}
\right).
\label{eq:l1_factor_bound}
\end{align}
Each measurement has \(d\) outcomes. A second application
of the Cauchy--Schwarz inequality yields
\begin{equation}
\sum_i\sqrt{p_i^A}
\leq
\sqrt{
d\sum_i p_i^A
}
=
\sqrt d,
\end{equation}
and similarly
\(
\sum_j\sqrt{p_j^B}\leq\sqrt d
\).
Therefore,
\begin{equation}
\sum_{i,j}|M_{ij}|
\leq d,
\end{equation}
which proves
\begin{equation}
L_{A,B}(\rho)\leq\ln d.
\end{equation}

More generally, for projective measurements with
\(n_A\) and \(n_B\) outcomes, the same argument gives
\begin{equation}
L_{A,B}(\rho)
\leq
\frac{1}{2}
\ln(n_A n_B).
\end{equation}

\subsection{Commuting measurements}

Suppose that the spectral projectors commute pairwise:
\begin{equation}
[\Pi_i^a,\Pi_j^b]=0
\qquad
\forall\,i,j.
\end{equation}
Then
\begin{equation}
M_{ij}
=
\operatorname{Tr}
\left(
\Pi_i^a\Pi_j^b\rho
\right).
\end{equation}
The product
\(
\Pi_i^a\Pi_j^b
\)
is positive semidefinite because it is the product of two
commuting projectors. Hence,
\begin{equation}
M_{ij}\geq0
\qquad
\forall\,i,j.
\end{equation}
Faithfulness then implies
\begin{equation}
L_{A,B}(\rho)=0.
\end{equation}
The contrapositive gives
Eq.~\eqref{eq:log_MHQ_incompatibility}.

\subsection{States block diagonal in the \texorpdfstring{$A$}{A} measurement}

Assume that
\begin{equation}
[\rho,\Pi_i^a]=0
\qquad
\forall\,i.
\end{equation}
Then
\begin{equation}
\rho
=
\sum_i
\Pi_i^a\rho\Pi_i^a.
\end{equation}
For each pair \(i,j\),
\begin{align}
M_{ij}
&=
\frac{1}{2}
\operatorname{Tr}
\left[
\left(
\Pi_i^a\Pi_j^b+
\Pi_j^b\Pi_i^a
\right)
\rho
\right]
\nonumber\\
&=
\operatorname{Tr}
\left(
\Pi_j^b\Pi_i^a\rho\Pi_i^a
\right).
\label{eq:MHQ_block_diagonal}
\end{align}
The operator
\(
\Pi_i^a\rho\Pi_i^a
\)
is positive semidefinite. Since \(\Pi_j^b\) is also positive
semidefinite,
\begin{equation}
\operatorname{Tr}
\left(
\Pi_j^b\Pi_i^a\rho\Pi_i^a
\right)
\geq0.
\end{equation}
Thus every MHQ element is nonnegative and
\begin{equation}
L_{A,B}(\rho)=0.
\end{equation}

For a rank-one measurement,
\(
\Pi_i^a\rho\Pi_i^a=p_i^A\Pi_i^a
\),
and Eq.~\eqref{eq:MHQ_block_diagonal} reduces to
\begin{equation}
M_{ij}
=
p_i^A
\left|
\langle a_i|b_j\rangle
\right|^2
\geq0.
\end{equation}

\subsection{Complete dephasing}

The completely dephased state
\begin{equation}
\Delta_A(\rho)
=
\sum_i
\Pi_i^a\rho\Pi_i^a
\end{equation}
commutes with every projector \(\Pi_i^a\). The result of the
preceding subsection therefore applies directly:
\begin{equation}
L_{A,B}\!\left(\Delta_A(\rho)\right)=0.
\end{equation}
Combining this identity with the non-negativity of
\(L_{A,B}(\rho)\) gives
\begin{equation}
L_{A,B}\!\left(\Delta_A(\rho)\right)
=
0
\leq
L_{A,B}(\rho).
\end{equation}
This completes the proof of
Proposition~\ref{prop:log_MHQ_properties}.

\section{Proof of Lemma~\ref{lemma:renyi}}\label{app:renyi_lemma}
Let $\{f_{ij}\}$ be a normalized probability distribution, i.e.,
\begin{equation}
    f_{ij} \ge 0,
    \qquad
    \sum_{i,j} f_{ij} = 1.
\end{equation}

Define the function
\begin{equation}
    S(r) = \sum_{i,j} f_{ij}^r,
\end{equation}
so that the Rényi entropy can be written as
\begin{equation}
    R_r(f) = \frac{1}{1-r} \ln S(r).
\end{equation}

We will show that $R_r(f)$ is a non-increasing function of $r$. For this, consider the derivative of $\ln S(r)$:
\begin{equation}
    \frac{d}{dr} \ln S(r)
    =
    \frac{1}{S(r)} \sum_{i,j} f_{ij}^r \ln f_{ij}.
\end{equation}

Define a new probability distribution
\begin{equation}
    p_{ij}^{(r)} = \frac{f_{ij}^r}{S(r)},
\end{equation}
which satisfies $\sum_{i,j} p_{ij}^{(r)} = 1$. Using this, we rewrite
\begin{equation}
    \frac{d}{dr} \ln S(r)
    =
    \sum_{i,j} p_{ij}^{(r)} \ln f_{ij}.
\end{equation}

Now, consider the derivative of $R_r(f)$:
\begin{align}
    \frac{d}{dr} R_r(f)
    &=
    \frac{d}{dr} \left( \frac{\ln S(r)}{1-r} \right) \nonumber \\
    &=
    \frac{(1-r)\frac{d}{dr}\ln S(r) + \ln S(r)}{(1-r)^2}.
\end{align}

Substituting the expression for $\frac{d}{dr} \ln S(r)$, we obtain
\begin{align}
    \frac{d}{dr} R_r(f)
    &=     \frac{1}{(1-r)} \sum_{i,j} p_{ij}^{(r)} \ln f_{ij} \nonumber \\
    & + \frac{1}{(1-r)^2} \ln S(r)
\end{align}

Also observe that
\begin{equation}
    \sum_{i,j} p_{ij}^{(r)} \ln p_{ij}^{(r)}
    =
    \sum_{i,j} p_{ij}^{(r)} \left( r \ln f_{ij} - \ln S(r) \right).
\end{equation}

Rearranging this expression gives
\begin{equation}
    \sum_{i,j} p_{ij}^{(r)} \ln f_{ij}
    =
    \frac{1}{r}
    \left[
    \sum_{i,j} p_{ij}^{(r)} \ln p_{ij}^{(r)} + \ln S(r)
    \right].
\end{equation}

Substituting back into the derivative, after simplification, one finds
\begin{equation}
    \frac{d}{dr} R_r(f)
    =
    -\frac{1}{(1-r)^2}
    \sum_{i,j} p_{ij}^{(r)}
    \ln \left( \frac{p_{ij}^{(r)}}{f_{ij}} \right).
\end{equation}

The term inside the sum is the logarithm of a ratio of probability distributions. Hence, the entire sum is the Kullback--Leibler divergence:
\begin{equation}
    D\!\left(p^{(r)} \,\|\, f\right)
    =
    \sum_{i,j} p_{ij}^{(r)} \ln \left( \frac{p_{ij}^{(r)}}{f_{ij}} \right)
    \ge 0.
\end{equation}

Therefore,
\begin{equation}
    \frac{d}{dr} R_r(f) \le 0,
\end{equation}
which shows that $R_r(f)$ is a non-increasing function of $r$. Hence, for $r < s$, we have
\begin{equation}
    R_r(f) \ge R_s(f),
\end{equation}
which completes the proof.

\section{Alternative proof of Theorem~\ref{thm:Lr_lower_bound}}\label{app:alternate_proof}
Starting from the definition of $L_r(\rho)$,
\begin{equation*}
    L_r(\rho)
    =
    \frac{1}{2-r}
    \left[
    \ln \tilde m_r
    +
    (1-r)\ln \tilde m_2
    \right],
\end{equation*}
we consider the difference,
    $L_r(\rho)-L(\rho).$ 
Using $L(\rho)=\ln \tilde m_1$, we obtain
\begin{equation}
    L_r(\rho)-L(\rho)
    =
    \frac{1}{2-r}
    \ln\left(
    \frac{
    \tilde m_r\,\tilde m_1^{\,r-2}
    }{
    \tilde m_2^{\,r-1}
    }
    \right).
\end{equation}
Since $r>2$, one has $2-r<0.$
Therefore, to prove
\begin{equation}
    L_r(\rho)-L(\rho)\le0,
\end{equation}
it suffices to show that
\begin{equation}
    \frac{
    \tilde m_r\,\tilde m_1^{\,r-2}
    }{
    \tilde m_2^{\,r-1}
    }
    \ge1,
\end{equation}
or equivalently,
\begin{equation} \label{eq:keyineq}
    \tilde m_2^{\,r-1} \le \tilde m_r\,\tilde m_1^{\,r-2}.
\end{equation}
With $x_{ij}=|M_{ij}(\rho)|$, we use the interpolation inequality for $\ell_p$ norms,
\begin{equation}
    \|x\|_2
    \le
    \|x\|_1^{\theta}
    \|x\|_r^{1-\theta},
\end{equation}
where $\theta$ is determined from
\begin{equation}
    \frac12
    =
    \frac{\theta}{1}
    +
    \frac{1-\theta}{r},
\end{equation}
one obtains
\begin{equation}
    \theta
    =
    \frac{r-2}{2(r-1)}.
\end{equation}
Substituting this value in the interpolation inequality yields
\begin{equation}
    \|x\|_2
    \le
    \|x\|_1^{\frac{r-2}{2(r-1)}}
    \|x\|_r^{\frac{r}{2(r-1)}}.
\end{equation}
Raising both sides to the power $2(r-1)$, we get
\begin{equation}
    \|x\|_2^{\,2(r-1)}
    \le
    \|x\|_1^{\,r-2}
    \|x\|_r^{\,r}.
\end{equation}
Here, 
$    \|x\|_1=\tilde m_1,
    \;
    \|x\|_2^2=\tilde m_2,
    \;
    \|x\|_r^r=\tilde m_r, $
leading to the condition,
\begin{equation*}
    \tilde m_2^{\,r-1}
    \le
    \tilde m_1^{\,r-2}\tilde m_r
\end{equation*}
which is precisely Eq.~\eqref{eq:keyineq}.
Hence,
\begin{equation}
    L_r(\rho)\le L(\rho).
\end{equation}
This completes the proof.

\section{Proof of Lemma~\ref{lemma:vertex_domination}}
\label{app:vertex_domination}

We first change variables via $p=a^2/2+2X$ and $q=b^2/2+2Y$, which
range over $a^2/2\le p\le a^2$ and $b^2/2\le q\le b^2$ respectively
and constitute an affine, order-preserving reparametrization of $X$
and $Y$. Substituting into $m_2$ and $m_4$ and expanding term by term
gives
\begin{equation}
m_2=p+q,
\qquad
m_4=f(p)+g(q),
\end{equation}
where
\begin{equation}
f(P)=\frac{P^2}{2}+a^2P-\frac{a^4}{2},
\qquad
g(Q)=\frac{Q^2}{2}+b^2Q-\frac{b^4}{2}.
\end{equation}
With $p=a^2/2+2X$ this identity is checked directly: one finds
$f(p)=a^4/8+3a^2X+2X^2$ and similarly $g(q)=b^4/8+3b^2Y+2Y^2$, whose
sum reproduces $m_4$ from Eqs.~\eqref{eq:m2_xy}--\eqref{eq:m4_xy}.
Both $f$ and $g$ are strictly convex quadratics, since $f''=g''=1$,
and this fact drives the whole proof. In these variables
$F=(p+q)^3/D$ with $D=f(p)+g(q)$, and the rectangle becomes the box
$a^2/2\le p\le a^2$, $b^2/2\le q\le b^2$, which we call $\mathcal B$.

At an interior stationary point of $\mathcal B$ we have
$\partial F/\partial p=\partial F/\partial q=0$. Writing $s=p+q$ and
applying the quotient rule gives
\begin{align}
\frac{\partial F}{\partial p}&=\frac{s^2\big(3D-sf'(p)\big)}{D^2}, \nonumber\\
\frac{\partial F}{\partial q}&=\frac{s^2\big(3D-sg'(q)\big)}{D^2}.
\end{align}
Since $s=p+q$ is strictly positive throughout $\mathcal B$, both
vanish simultaneously precisely when $3D=sf'(p)=sg'(q)$, hence in
particular $f'(p)=g'(q)$. Because $f'(p)=p+a^2$ and $g'(q)=q+b^2$,
this last condition reads $p-q=b^2-a^2$. Substituting
$q=p+a^2-b^2$ into $3D=sf'(p)$ and expanding yields
\begin{equation}
p^2+(3a^2+b^2)p-\big(a^4-a^2b^2+3b^4\big)=0.
\label{eq:crit_quadratic}
\end{equation}
Its two roots multiply to $-(a^4-a^2b^2+3b^4)$, and since
$a^4-a^2b^2+3b^4=(a^2-b^2/2)^2+11b^4/4$ is strictly positive this
product is negative, so Eq.~\eqref{eq:crit_quadratic} has exactly one
positive root,
\begin{equation}
p_*=-\frac{3a^2}{2}-\frac{b^2}{2}+\frac12\sqrt{13a^4+2a^2b^2+13b^4},
\end{equation}
with $q_*=p_*+a^2-b^2$. The pair $(p_*,q_*)$ is therefore the only
candidate interior critical point of $F$ in $\mathcal B$.

To see that this candidate is not a local maximum we avoid the
Hessian entirely and instead restrict $F$ to the curve
$q=p+a^2-b^2$, which contains $(p_*,q_*)$ by construction since it is
exactly the curve on which $f'(p)=g'(q)$ holds. Substituting
$q=p+a^2-b^2$ into $D=f(p)+g(q)$ and expanding gives, after
simplification,
\begin{equation}
\widehat D(p)=p^2+2a^2p-b^4,
\qquad
\widehat s(p)=2p+a^2-b^2,
\end{equation}
and we let $h(p)=\widehat s(p)^3/\widehat D(p)$, so that
$h(p)=F(p,\,p+a^2-b^2)$ for every $p$. The quotient rule, using
$\widehat s'=2$ and $\widehat D'=2p+2a^2$, gives after collecting
terms
\begin{equation}
h'(p)=\frac{2\big(a^2-b^2+2p\big)^2\,Q(p)}{\widehat D(p)^2},
\end{equation}
where $Q(p)$ is precisely the left side of
Eq.~\eqref{eq:crit_quadratic},
\begin{equation}
Q(p)=p^2+(3a^2+b^2)p-\big(a^4-a^2b^2+3b^4\big).
\end{equation}
The prefactor $2(a^2-b^2+2p)^2/\widehat D(p)^2$ is a ratio of
nonnegative quantities and never changes sign, so the sign of
$h'(p)$ is governed entirely by $Q(p)$. We have already seen that
$Q$ has a unique positive root $p_*$, and since
$Q(0)=-(a^4-a^2b^2+3b^4)$ is strictly negative while $Q$ is an
upward parabola, $Q(p)<0$ for $0\le p<p_*$ and $Q(p)>0$ for
$p>p_*$. Hence $h$ is strictly decreasing on $(0,p_*)$ and strictly
increasing on $(p_*,\infty)$, so $p_*$ is a strict local minimum of
$h$: for every $p\ne p_*$ close enough to $p_*$ we have
$h(p)>h(p_*)$. Since $h$ coincides with $F$ along this curve, such
points furnish values of $F$ strictly larger than $F(p_*,q_*)$
arbitrarily close to $(p_*,q_*)$, so whenever $(p_*,q_*)$ lies in the
interior of $\mathcal B$ it cannot be a local maximum there. Together
with the fact that $(p_*,q_*)$ is the only interior critical-point
candidate to begin with, we conclude that $F$ has no interior local
maximum anywhere in $\mathcal B$.

Since $F$ is continuous on the compact set $\mathcal B$ and admits no
interior local maximum, its maximum must lie on the boundary, made up
of the four edges $p\in\{a^2/2,a^2\}$ and $q\in\{b^2/2,b^2\}$. We now
show that the interior of each edge fails to contain the maximum as
well, so that it must in fact sit at a corner. Let $p_0$ denote either
$a^2/2$ or $a^2$, and consider $F(p_0,q)$ as a function of $q$ alone
on $b^2/2\le q\le b^2$. The same quotient-rule computation, now with
$p_0$ held fixed, shows that $\partial F/\partial q$ is proportional,
with a positive proportionality constant, to $(p_0+q)^2\psi(q)$,
where
\begin{equation}
\psi(q)=q^2+\beta q+\gamma,
\end{equation}
with
\begin{equation}
\beta=4b^2-2p_0,
\qquad
\gamma=3p_0^2+6a^2p_0-2b^2p_0-3a^4-3b^4.
\end{equation}
As before, $(p_0+q)^2\ge0$ never changes sign, so the sign of
$\partial F/\partial q$ is governed by $\psi$. Writing
$\Delta=\beta^2-4\gamma$, and letting $q_-=(-\beta-\sqrt\Delta)/2$
denote the smaller root of $\psi$ whenever $\Delta\ge0$, we claim that
$q_-\le0$ whenever $\Delta\ge0$, for both $p_0=a^2/2$ and $p_0=a^2$.
If $\beta\ge0$ this is immediate, since $q_-$ is then a sum of two
non-positive quantities. If instead $\beta<0$, we have
$\Delta<0$, so this case never actually produces a real root and the
claim holds vacuously.

For $p_0=a^2/2$, $\beta=4b^2-a^2$, so $\beta<0$ is equivalent to
$a>2b$, that is, to $a>2/3$ once we use $a+b=1$. Also
$\Delta=-2a^4-4a^2b^2+28b^4$ is strictly less than
$\widetilde\Delta(a)=-2a^4+28b^4$, since $-4a^2b^2$ is strictly
negative. At $a=2/3$, so $b=1/3$, one finds
$\widetilde\Delta(2/3)=-2(2/3)^4+28(1/3)^4=-4/81$, and since
$\widetilde\Delta'(a)=-8a^3-112(1-a)^3$ is negative throughout
$(2/3,1)$, $\widetilde\Delta$ stays negative on all of $[2/3,1)$.
Hence $\Delta<\widetilde\Delta<0$ wherever $\beta<0$ for this edge.

For $p_0=a^2$, $\beta=4b^2-2a^2$, so $\beta<0$ is equivalent to
$a>\sqrt2\,b$, that is, to $a>2-\sqrt2$. Also
$\Delta=-20a^4-8a^2b^2+28b^4$ is strictly less than
$\widehat\Delta(a)=-20a^4+28b^4$. At $a=2-\sqrt2$, so $b=\sqrt2-1$,
one finds $\widehat\Delta(2-\sqrt2)=-884+624\sqrt2\approx-1.53$, and
since $\widehat\Delta'(a)=-80a^3-112(1-a)^3$ is negative throughout
$(2-\sqrt2,1)$, $\widehat\Delta$ stays negative there too, so again
$\Delta<\widehat\Delta<0$ wherever $\beta<0$.

In both cases $q_-\le0$ whenever $\Delta\ge0$. Since $q$ ranges only
over $b^2/2\le q\le b^2$, we always have $q>q_-$, so if $\Delta<0$
then $\psi$ stays positive throughout the interval and $F(p_0,\cdot)$
is strictly increasing, while if $\Delta\ge0$ then $\psi$ is negative
between $q_-$ and the larger root $q_+$ and positive beyond it, so
$F(p_0,\cdot)$ is non-increasing up to $\min(q_+,b^2)$ and
non-decreasing afterward. Either way $F(p_0,\cdot)$ has at most a
single interior minimum and no interior maximum, so its largest value
on this edge sits at $q=b^2/2$ or $q=b^2$, a corner of $\mathcal B$.
The remaining two edges, on which $q$ is fixed at $b^2/2$ or $b^2$
and $p$ is free, are handled by exactly the same argument with the
roles of $(p,a)$ and $(q,b)$ interchanged, since $f$ and $g$ share the
same functional form and differ only in which parameter enters them,
so that $m_4=f(p)+g(q)$ is symmetric under this exchange and the
computation above applies verbatim after swapping $a$ and $b$.

Since $F$ has no interior local maximum in $\mathcal B$, and none of
its four edges contains an interior maximum either, continuity of $F$
on the compact set $\mathcal B$ forces its global maximum to be
attained at one of the four corners, that is, at $(X,Y)$ equal to
$(0,0)$, $(A,0)$, $(0,B)$, or $(A,B)$.

\section{Extension to Kirkwood--Dirac quasiprobabilities}
\label{s_kd}

The MHQ distribution is the real part of the
Kirkwood--Dirac quasiprobability (KD) distribution. It is therefore natural to ask whether the moment-based quantification developed above extends directly to the full, generally complex-valued KD distribution. We show that the logarithmic and moment-based construction remains mathematically valid, but its experimental realization requires a different class
of multicopy moments.

\subsection{KD quasiprobability and its absolute weight}

Let
\(
\{\ket{a_i}\}
\)
and
\(
\{\ket{b_j}\}
\)
be orthonormal bases associated with two projective
observables \(A\) and \(B\), and define
\begin{equation}
\Pi_i^a=\ket{a_i}\bra{a_i},
\qquad
\Pi_j^b=\ket{b_j}\bra{b_j}.
\end{equation}
For a state \(\rho\), the corresponding KD distribution is
\begin{equation}
Q_{ij}(\rho;A,B)
=
\operatorname{Tr}
\left(
\Pi_j^b\Pi_i^a\rho
\right).
\label{eq:KD_definition}
\end{equation}
It satisfies
\begin{align}
\sum_{i,j}Q_{ij}
=1, \;
\sum_jQ_{ij}
=
\operatorname{Tr}(\Pi_i^a\rho),
\;
\sum_iQ_{ij}
=
\operatorname{Tr}(\Pi_j^b\rho).
\end{align}
Unlike the MHQ distribution,
\(
Q_{ij}
\)
need not be real. Its real part is precisely the MHQ
quasiprobability,
\begin{equation}
M_{ij}(\rho;A,B)
=
\operatorname{Re}Q_{ij}(\rho;A,B).
\label{eq:MHQ_real_KD}
\end{equation}

A KD distribution is an ordinary probability distribution
only when every element is real and nonnegative. A natural
measure of its departure from this classical set is therefore
its total absolute weight,
\begin{equation}
\|Q\|_1
=
\sum_{i,j}|Q_{ij}|.
\end{equation}
Since
\(
\sum_{i,j}Q_{ij}=1
\),
the triangle inequality gives
\begin{equation}
\|Q\|_1\geq1.
\end{equation}
Moreover, equality holds if and only if all KD elements are
real and nonnegative. This motivates the logarithmic KD
nonclassicality
\begin{equation}
L^{\mathrm{KD}}_{A,B}(\rho)
=
\ln
\left[
\sum_{i,j}
|Q_{ij}(\rho;A,B)|
\right].
\label{eq:log_KD_definition}
\end{equation}
Thus,
$L^{\mathrm{KD}}_{A,B}(\rho)=0$
if and only if the KD distribution is a genuine probability
distribution. The real and imaginary parts may also be studied separately.
For example,
\begin{equation}
\mathcal N_{\mathrm{Re}}^{\mathrm{KD}}
=
\frac12 \left(\sum_{i,j}
|\operatorname{Re}Q_{ij}|-1 \right)
\end{equation}
quantifies the negative weight of the real part, while
\begin{equation}
\mathcal I^{\mathrm{KD}}
=
\sum_{i,j}
|\operatorname{Im}Q_{ij}|
\end{equation}
quantifies the total magnitude of the imaginary contribution.
We do not take the logarithm of
\(
\mathcal I^{\mathrm{KD}}
\),
since it can vanish for an entirely real KD distribution.

\subsection{Moment based lower bounds}
We define the KD moments by
\begin{equation}
q_n
=
\sum_{i,j}
Q_{ij}^{\,n},
\qquad
n\in\mathbb N,
\label{eq:ordinary_KD_moments}
\end{equation}
and the absolute moments by
\begin{equation}
\widetilde q_n
=
\sum_{i,j}
|Q_{ij}|^n.
\label{eq:absolute_KD_moments}
\end{equation}
The moments \(q_n\) are generally complex, whereas
\(
\widetilde q_n
\)
are real and nonnegative. The same interpolation argument used for the MHQ
distribution applies to the absolute KD moments. For
\(r>2\), define
\begin{equation}
L^{\mathrm{KD}}_{r;A,B}(\rho)
=
\frac{1}{2-r}
\left[
\ln\widetilde q_r
+
(1-r)\ln\widetilde q_2
\right].
\label{eq:KD_moment_lower_bound}
\end{equation}
Then
\begin{equation}
L^{\mathrm{KD}}_{r;A,B}(\rho)
\leq
L^{\mathrm{KD}}_{A,B}(\rho).
\label{eq:KD_lower_bound}
\end{equation}
The proof is identical to that of
Theorem~\ref{thm:Lr_lower_bound}, with the replacements
\(
\widetilde m_n\mapsto\widetilde q_n
\).
Thus, the complex character of the KD distribution does not
invalidate the lower-bound construction itself.

The distinction from the MHQ case appears when one asks
which moments are directly represented by ordinary
multicopy power sums. 
For a complex KD element,
\begin{equation}
|Q_{ij}|^{2k}
=
(Q_{ij}Q_{ij}^{*})^k,
\end{equation}
which is generally different from $Q_{ij}^{\,2k}.$
Consequently,
$\widetilde q_{2k}
\neq
q_{2k}$
in general. Ordinary KD moments therefore do not
contain the magnitude information required by
Eq.~\eqref{eq:KD_moment_lower_bound}.

\subsection{Multicopy representation of the absolute moments}

The failure of
\(
\widetilde q_{2k}=q_{2k}
\)
does not imply that the absolute moments are experimentally
inaccessible. To see this, define
\begin{equation}
A_{ij}
=
\Pi_j^b\Pi_i^a,
\end{equation}
so that
\begin{equation}
Q_{ij}
=
\operatorname{Tr}(A_{ij}\rho),
\qquad
Q_{ij}^{*}
=
\operatorname{Tr}(A_{ij}^{\dagger}\rho).
\end{equation}
For every positive integer \(k\),
\begin{align}
|Q_{ij}|^{2k}
&=
\left[
\operatorname{Tr}(A_{ij}\rho)
\right]^k
\left[
\operatorname{Tr}(A_{ij}^{\dagger}\rho)
\right]^k
\nonumber\\
&=
\operatorname{Tr}
\left[
\left(
A_{ij}^{\otimes k}
\otimes
A_{ij}^{\dagger\otimes k}
\right)
\rho^{\otimes2k}
\right].
\label{eq:absolute_KD_multicopy_element}
\end{align}
Summing over the outcomes gives
\begin{equation}
\widetilde q_{2k}
=
\operatorname{Tr}
\left[
T_{2k}\rho^{\otimes2k}
\right],
\label{eq:absolute_KD_multicopy}
\end{equation}
where the Hermitian multicopy observable may be chosen as
\begin{align}
T_{2k}
=
\frac{1}{2}
\sum_{i,j}
\Big[
&A_{ij}^{\otimes k}
\otimes
A_{ij}^{\dagger\otimes k}
+
A_{ij}^{\dagger\otimes k}
\otimes
A_{ij}^{\otimes k}
\Big].
\label{eq:T2k_definition}
\end{align}
The two terms in Eq.~\eqref{eq:T2k_definition} have
complex-conjugate expectation values, and their symmetrized
combination yields the real quantity
\(
\widetilde q_{2k}
\).
Therefore, even absolute KD moments remain multicopy
functions of the state. Their estimation is, however, more
involved than in the MHQ case: instead of the ordinary
power observable
\begin{equation}
S_n^{\mathrm{KD}}
=
\sum_{i,j}
A_{ij}^{\otimes n},
\end{equation}
one must estimate mixed products containing both
\(A_{ij}\) and \(A_{ij}^{\dagger}\). Whether this procedure is
experimentally advantageous depends on the measurement
implementation and the variance of the corresponding
multicopy estimator.

\subsection{Conjugation symmetry}

A symmetry of the KD elements can make the ordinary even
moments real, but does not convert them into absolute moments.
Suppose that the multiset
\(
\{Q_{ij}\}
\)
is invariant, including multiplicities, under the
transformation
\begin{equation}
z\longmapsto z^{*}
\qquad\text{or}\qquad
z\longmapsto -z^{*}.
\label{eq:signed_conjugation}
\end{equation}
For an even power \(2k\),
\begin{equation}
(-z^{*})^{2k}
=
(z^{*})^{2k},
\end{equation}
and each conjugate pair contributes
\begin{equation}
z^{2k}+(z^{*})^{2k}
=
2\operatorname{Re}(z^{2k}).
\end{equation}
Hence, $q_{2k}\in\mathbb R.$ Nevertheless, writing
\(
z=a+ib
\),
one has
\begin{equation}
z^{2k}+(z^{*})^{2k}
=
2\operatorname{Re}\left[(a+ib)^{2k}\right],
\end{equation}
whereas
\begin{equation}
|z|^{2k}+|z^{*}|^{2k}
=
2(a^2+b^2)^k.
\end{equation}
These expressions are unequal for generic \(a\) and \(b\).
Conjugation symmetry therefore removes the imaginary part of
the summed even power moment, but it does not eliminate phase
cancellations or reproduce
\(
\widetilde q_{2k}
\).

The MHQ simplification is recovered only when every KD
element is real, in which case
$\widetilde q_{2k}
=
q_{2k}
=
m_{2k}.$
The relevant distinction is therefore not that KD
nonclassicality is fundamentally inaccessible to a
moment-based approach. Rather, ordinary power moments
\(
q_n
\)
are insufficient for quantifying the total KD absolute weight.
A quantitative KD extension requires mixed moments of
\(Q_{ij}\) and \(Q_{ij}^{*}\), or equivalently the multicopy
observables in Eq.~\eqref{eq:T2k_definition}.

\bibliography{main}
\end{document}